\newtheorem{lemma}{Lemma}
\newtheorem{prop}{Proposition}
\def\BibTeX{{\rm B\kern-.05em{\sc i\kern-.025em b}\kern-.08em
    T\kern-.1667em\lower.7ex\hbox{E}\kern-.125emX}}
\crefname{figure}{Fig.}{Fig.}
\Crefname{figure}{Fig.}{Fig.}
\crefname{section}{Sec.}{Sec.}
\Crefname{section}{Sec.}{Sec.}
\crefname{subsection}{Sec.}{Sec.}
\Crefname{subsection}{Sec.}{Sec.}
\crefname{theorem}{Theorem}{Theorem}
\Crefname{theorem}{Theorem}{Theorem}
\crefname{lemma}{Lemma}{Lemma}
\Crefname{lemma}{Lemma}{Lemma}
\crefname{prop}{Prop.}{Prop.}
\Crefname{prop}{Prop.}{Prop.}
\crefname{equation}{}{}
\Crefname{equation}{}{}
\newcommand\DoSixPageSubmission
\newcommand\tn[1]{\text{\normalfont{#1}}} 
  \newcommand\f[2]{\frac{#1}{#2}} 
  \newcommand\re{\mathrm{Re}} 
  \newcommand\im{\mathrm{Im}} 
\renewcommand\Re{\re} 
\renewcommand\Im{\im} 
\DeclareMathOperator\arcosh{arcosh}
  \newcommand\IndicatorFunc{\mathds{1}}
  \newcommand\mtx[2]{\left[\begin{array}{#1}#2\end{array}\right]}
  \newcommand\eye{\mathbf{I}}     
  \newcommand\Tr{^\mathrm{\normalfont{{T}}}}     
\newcommand\SymbolOfProb{\mathrm{P}} 
\newcommand\Prob[1]{\SymbolOfProb[\hspace{.35mm}#1\hspace{.35mm}]}
\newcommand\ProbLR[1]{\SymbolOfProb\left[\hspace{.35mm}#1\hspace{.35mm}\right]}
\newcommand\SymbolOfEV{\mathbb{E}} 
\newcommand\EVVar[2]{\SymbolOfEV_{#1}[\hspace{.2mm}#2\hspace{.2mm}]}
\newcommand\EV[1]{\EVVar{}{#1}}
\newcommand\iid{\overset{\tn{i.i.d.}}{\sim}}
\newcommand\UD{\mathcal{U}} 
\renewcommand\o{\mathbf{o}}   
\newcommand\DoD{\mathbf{u}}
\newcommand\unit[1]{\,\mathrm{#1}}	
\newcommand\dB{\unit{dB}}
\renewcommand\b{\boldsymbol\upbeta} 
\newcommand\nfs[1]{_{\tn{NF}#1}} 
\newcommand\ffs[1]{_{\tn{FF}#1}} 
\newcommand\nf{\nfs{}}
\newcommand\ff{\ffs{}}
\newcommand\ffSPACE{\ff\hspace{.27mm}} 
\newcommand\coeffH{\alpha} 
\newcommand\SymbolOfTx{\hspace{.2mm}\tn{T}}
\newcommand\SymbolOfRx{\hspace{.2mm}\tn{R}}
\newcommand\Txs[1]{_{\SymbolOfTx#1}} 
\newcommand\Rxs[1]{_{\SymbolOfRx#1}} 
\newcommand\Tx{\Txs{}} 
\newcommand\Rx{\Rxs{}} 
\newcommand\SNR{\mathrm{SNR}}
\renewcommand\v{\mathbf{v}} 
\newcommand{\subRe}{_\tn{Re}}
\newcommand{\subIm}{_\tn{Im}}
  \newcommand\SymbolOfN{\mathrm{N}}
  \newcommand\Ns[1]{_{\SymbolOfN#1}}
  \newcommand\N{\Ns{}}
\newcommand\MakeNumberRoman[1]{\uppercase\expandafter{\romannumeral#1}}
\newcommand\myCustomVec{}
\newcommand{\CircFunc}{\psi} 
\newcommand{\bEquiv}{\bar\beta\nf}
\newcommand\Coax{_\tn{coax}}
\newcommand\Copl{_\tn{para}}
\newcommand\CoplSPACE{_{\tn{para}\hspace{.43mm}}}
\newcommand\Opt{_\tn{opt}}
\newcommand\NTurn{N}
\newcommand\NTurnTx{\NTurn\Tx}
\newcommand\NTurnRx{\NTurn\Rx}
\newcommand\OutProb{\epsilon}
\newcommand\BER{p_\tn{e}}
\begin{document}

\title{The Channel Between Randomly Oriented Dipoles: Statistics and Outage in the Near and Far Field}
\author{
\IEEEauthorblockN{Gregor Dumphart and Armin Wittneben} 
\IEEEauthorblockA{\textit{Wireless Communications Group, D-ITET, ETH Zurich}\\
Z\"urich, Switzerland\\
Email: \{dumphart, wittneben\}@nari.ee.ethz.ch}}



\maketitle

\begin{abstract}
We consider the class of wireless links whose propagation characteristics are described by a dipole model. This comprises free-space links between dipole antennas and magneto-inductive links between coils, with important communication and power transfer applications. A dipole model describes the channel coefficient as a function of link distance and antenna orientations. In many use cases the orientations are random, causing a random fading channel. This paper presents a closed-form description of the channel statistics and the resulting outage performance for the case of i.i.d. uniformly distributed antenna orientations in 3D space. For reception in AWGN after active transmission, we show that the high-SNR outage probability scales like $p_\mathrm{e} \propto \mathrm{SNR}^{-1/2}$ in the near- or far-field region, i.e. the diversity exponent is just 1/2 (even 1/4 with backscatter or load modulation). The diversity exponent improves to 1 in the near-far-field transition due to polarization diversity. Analogous statements are made for the power transfer efficiency and outage capacity. 
\end{abstract}

\begin{IEEEkeywords}
fading, outage, polarization diversity, dipole antennas, loop antennas, coil misalignment, backscatter
\end{IEEEkeywords}

\section{Introduction}
Wireless engineers often rely on statistical channel models to describe complicated propagation environments and their fading characteristics. Fading occurs when (for a narrowband channel) the random channel coefficient $h \in \mathbb{C}$ is close to zero, i.e. $h \approx 0$, which can cause a link outage \cite{Tse2005}.
A statistical channel model allows to study the performance implications of fading analytically.
For example, for a digital modulation scheme with transmit power $P$, Rayleigh fading $h \sim \mathcal{CN}(0,\sigma^2)$ due to rich multipath propagation, and additive white Gaussian noise (AWGN), the bit error rate is asymptotically proportional to $P^{-L}$ with $L = 1$. The small diversity exponent $L$ means that Rayleigh fading hinders reliable low-power communication \cite[Cpt.~3]{Tse2005}. The existing literature states many such performance results for various models, e.g. see \cite{BiglieriTIT1998}.

Fading events $h \approx 0$ can be caused by mechanisms other than multipath propagation or shadowing; they can even occur in free space in the case of inopportune antenna orientations. For example, the transmitter-to-receiver (TX-to-RX) direction may coincide with a zero of the TX-antenna radiation pattern or the RX antenna may be misaligned with the incident field \cite{Balanis2005}. To that effect, non-isotropic antennas with random orientations can give rise to a fading channel with random channel coefficient $h$, even in free space \cite{CoxTC1983,DietrichTAP2001}.
Such random antenna orientations  are to be expected for wireless applications with high mobility or application-specific node locations.
Associated fading channels have been studied in \cite{CoxTC1983} for mobile radio devices and in \cite{DumphartPIMRC2016Short,ZhangTWC2017} for magnetic induction links between randomly arranged coils.

In this paper we study fading due to random antenna orientations for the class of wireless links that can be adequately modeled by a free-space link from a TX dipole to a RX dipole. This comprises:
\begin{itemize}
\item Magnetic induction links between two weakly coupled coils (loop antennas).
\item Capacitive links between small electric dipole antennas.
\item Links between $\lambda/2$-length dipole antennas in free space.
\item Links from a magnetic dipole to a small loop or from an electric dipole to a pair of terminals with small separation.
\end{itemize}
Those have important applications in wireless power transfer and data communication, either with an active TX or a passive tag (RFID load modulation or backscatter modulation). \cite{Finkenzeller2015}

The statistics and communication-theoretic performance aspects of this fading channel are, to the best of our knowledge, not covered by existing work.
The need for an appropriate statistical channel model was highlighted in \cite{AngererRuppTC2010} where, because of the lack of a better model, a Rayleigh fading model was assumed for RFID links. Similarly, the heuristic assumption of a Gaussian-distributed data rate was made in \cite{SunTC2013} for a randomly arranged magnetic induction link.

\subsubsection*{Contribution}
This paper contains the following novel results for links between dipoles with uniformly distributed 3D orientations, presented in communication-theoretic parlance.
\begin{itemize}
\item We derive the channel statistics for the near- and far-field region. We show that the outage behavior is characterized by a diversity exponent of just $\f{1}{2}$.
\item We derive the channel statistics in the near-far-field transition and demonstrate a diversity exponent of $1$.
\item An outage analysis demonstrates the severity of this fading channel in terms of the behavior of the outage PTE, outage capacity, and bit error probability.
\end{itemize}

\subsubsection*{Related Work}
Regarding misaligned magnetic-induction links, most studies focused on small lateral or angular deviations in the regime of short-range power transfer \cite{Soma1987,Fotopoulou2011} where the specific coil geometries must be considered.
The concept of outage probability, diversity exponent and outage capacity in relation to fading is well-established for multipath radio channels \cite{Tse2005}. Likewise, polarization diversity is a well-established concept \cite[Sec.~2.5]{Orfanidis2002}.
The work in \cite{SunTC2013} identified the outage capacity as a meaningful performance measure of randomly arranged magneto-inductive communication links.
The distribution in \Cref{eq:PdfJNf} and various results on diversity combining appeared in our paper \cite{DumphartPIMRC2016Short}.
The contents of this paper are also contained in the dissertation of the first author \cite{Dumphart2020}.

\subsubsection*{Paper Structure}
\Cref{sec:formalism} describes the dipole model in detail. The rather technical \Cref{sec:stats} then derives the channel statistics between randomly oriented antennas, which enables the subsequent outage and diversity analysis in \Cref{sec:outage}. After 
commenting on the implications for RFID and backscatter systems in \Cref{sec:rfid}, we conclude the paper in \Cref{sec:summary}.

\section{Dipole Channel Model}\label{sec:formalism}
We consider a narrowband wireless link from a transmitting dipole, driven by a TX amplifier, to a receiving dipole which feeds a low-noise amplifier or tank circuit. The link geometry is shown in \Cref{fig:RandomOrConcept} and is described by the link distance $r$ and three unit vectors: the TX and RX dipole axis directions $\o\Tx, \o\Rx \in \mathbb{R}^3$ and the TX-to-RX direction $\DoD \in \mathbb{R}^{\scriptscriptstyle 3}$.

\begin{figure}[!ht]\centering
\includegraphics[width=.70\columnwidth,trim=0 0 0 0]{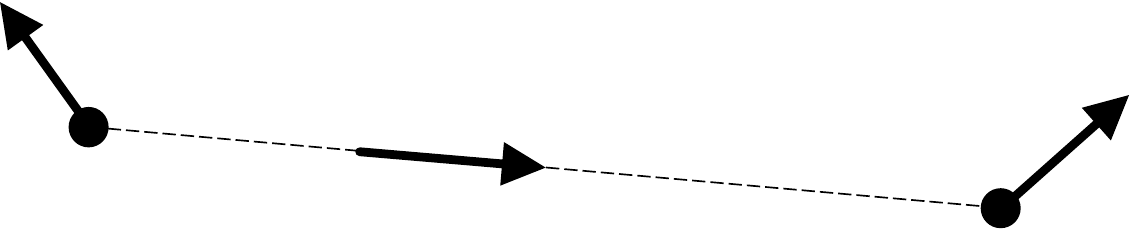}
\put(-180,48){$\substack{\text{transmit dipole}\\[.5mm]\text{orientation}}$}
\put(-165,33){$\o\Tx$}
\put(-34,41){$\substack{\text{receive dipole}\\[.5mm]\text{orientation}}$}
\put(-19,26){$\o\Rx$}
\put(-120,31){$\substack{\text{TX-to-RX}\\[.9mm]\text{direction}}$}
\put(-110,16){$\DoD$}
\put(-69,25){$\substack{\text{link}\\[.9mm]\text{distance}}$}
\put(-60,10){$r$}
\caption{Geometry of a link between dipoles with arbitrary orientations $\o\Tx, \o\Rx \in \mathbb{R}^3$ (unit vectors) and distance $r$. The model  serves as a description of unaligned links between weakly coupled coils or between dipole antennas.}
\label{fig:RandomOrConcept}
\end{figure}

The channel coefficient $h \in \mathbb{C}$ is given by \cite{Balanis2005,Dumphart2020}
\begin{align}
h
= \coeffH \left(\!\left( \f{1}{(kr)^3} + \f{j}{(kr)^2} \right) J\nf 
+ \f{1}{2kr}\, J\ff \right)
\label{eq:hDipole}
\end{align}
where $k = \f{2\pi}{\lambda} = \f{2\pi f}{c}$ is the wavenumber and $j$ the imaginary unit.
The prefactor $\coeffH$ is of no formal importance for this paper. It is given by $\coeffH = \bar\coeffH \,e^{-jkr}$ where $\bar\coeffH \in \mathbb{C}$ subsumes technical parameters (e.g., coil diameters) which are described in the appendix together with the detailed model conditions.

The RX may be located in the near-field region ($kr \ll 1$) or the far-field region ($kr \gg 1$), or in the transition in between.
The formula \Cref{eq:hDipole} uses the near- and far-field alignment factors,
given by the inner products
\begin{align}
&J\nf = \o\Rx\Tr \b\nf 
\, , &
&J\nf \in [-1,1]
\, , \label{eq:JNfDef} \\
&J\ffSPACE = \o\Rx\Tr \b\ffSPACE 
\, , &
&J\ffSPACE \in [-1,1]
\, . \label{eq:JFfDef}
\end{align}
They account for signal attenuation due to suboptimal node orientations (misalignment). The formulas use unitless field vector quantities $\b\nf$ and $\b\ff$, which we call the scaled near field
and the scaled far field, respectively. They are given by
\begin{align}
&\b\nf = \f{1}{2}( 3\DoD\DoD\Tr - \eye_3 )\,\o\Tx
\, , \label{eq:bNfDef} \\
&\b\ffSPACE = (\eye_3 - \DoD\DoD\Tr)\o\Tx
\label{eq:bFfDef}
\end{align}
and illustrated in \Cref{fig:beta}.
The formulas use a convenient linear-algebraic formalism, which has been derived in our previous work \cite[App.~A]{Dumphart2020} from an existing trigonometric description of the dipole field \cite{Balanis2005}.
The field magnitudes are given by
\begin{align}
&\beta\nf = \|\b\nf\| = \f{1}{2} \sqrt{1\!+\!3(\DoD\Tr \o\Tx)^2}
\, , && 
\tfrac{1}{2} \leq \beta\nf \leq 1
\, , \label{eq:bNfMagn} \\
&\beta\ffSPACE = \|\b\ff\| = \sqrt{1\!-\!(\DoD\Tr \o\Tx)^2}
\, , &&
0 \leq \beta\ff \leq 1 \, .
\label{eq:bFfMagn}
\end{align}
The far-field magnitude $\beta\ff$ can fade to zero but $\beta\nf$ can not, as can be seen in \Cref{fig:beta}.

\begin{figure}[!h]\centering
\subfloat[scaled near field $\b\nf$]{
\includegraphics[width=.44\columnwidth]{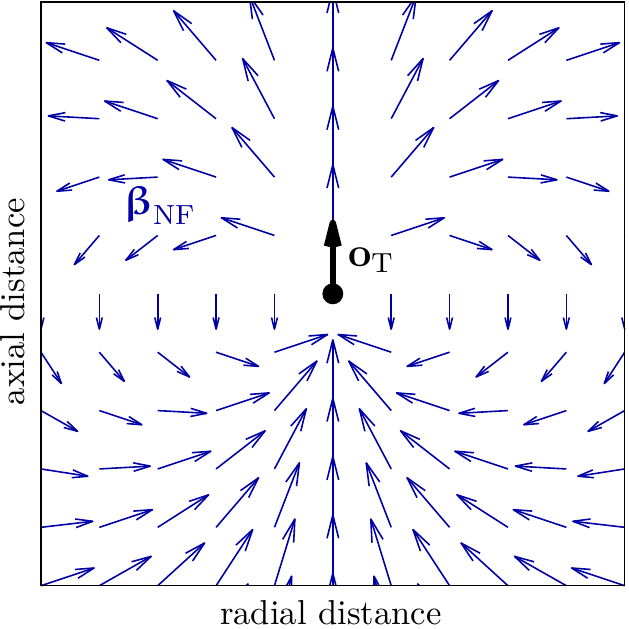}
}\ \ \,
\subfloat[scaled far field $\b\ff$]{
\includegraphics[width=.44\columnwidth]{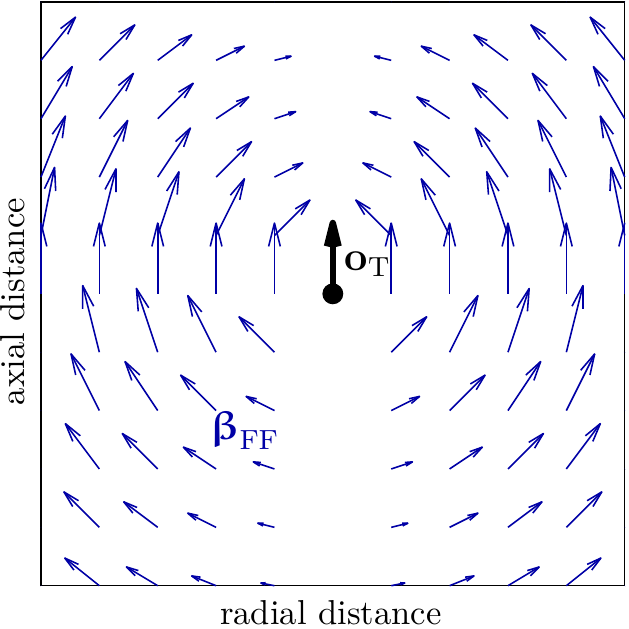}
}
\caption{Scaled near and far field around a transmitting dipole with vertical axis orientation (unit vector $\o\Tx$). By definition these fields do not comprise path loss; the maximum magnitude is $1$. In particular, $\beta\nf = 1, \beta\ff = 0$ hold on the dipole axis while $\beta\nf = \f{1}{2}, \beta\ff = 1$ hold in the perpendicular plane.}
\label{fig:beta}
\end{figure}

We shall point out two specific dipole arrangements:
\begin{itemize}
\item Dipoles in coaxial arrangement $\o\Tx = \o\Rx  = \DoD$:\\
in this case $J\nf = 1$, $J\ff = 0$, and thus $h = h\Coax$ with
\begin{align}
h\Coax =
\coeffH \left( \f{1}{(kr)^3} + \f{j}{(kr)^2} \right) .
\label{eq:hCoax}
\end{align}
\item Dipoles in parallel arrangement $\o\Tx = \o\Rx$ with $\DoD\Tr \o\Tx = 0$:
in this case $J\nf = -\tfrac{1}{2}$, $J\ff = 1$ and thus $h = h\Copl$ with
\begin{align}
h\Copl =
\f{\coeffH}{2} \left( -\f{1}{(kr)^3} - \f{j}{(kr)^2} + \f{1}{kr}  \right) .
\label{eq:hCopl}
\end{align}
\end{itemize}

We note that $\eta = |h|^2$ is the power transfer efficiency (PTE) over the link.
An important quantity is the maximum PTE given $kr$ and $\coeffH$, denoted as
$\eta\Opt = |h\Opt|^2$.
We find that\footnotemark{}
\begin{align}
\eta\Opt
=
|h\Opt|^2
=
\max_{\o\Tx,\o\Rx} |h|^2
=
\left\{\begin{array}{ll}
|h\Coax|^2 & \tn{if}\ kr \leq kr_\tn{th} \\
|h\CoplSPACE|^2 & \tn{if}\ kr  >   kr_\tn{th}
\end{array}\right.
\label{eq:hOpt}
\end{align}
whereby the threshold fulfills $|h\Coax| = |h\Copl|$. It is given by
\begin{align}
kr_\tn{th} = \sqrt{\f{\sqrt{37}+5}{2}} \approx 2.3540 \, .
\label{eq:krThresh}
\end{align}

\footnotetext{To prove the statement \Cref{eq:hOpt}, we write $h$ as bilinear form $h = \o\Rx\Tr {\bf A} \o\Tx$ and deduce
${\bf A} = \coeffH ( ( \f{1}{(kr)^3} + \f{j}{(kr)^2} ) ( \f{3}{2}\DoD\DoD\Tr - \f{1}{2}\eye_3 )
+ \f{1}{2kr}(\eye_3 - \DoD\DoD\Tr) ) \in \mathbb{C}^{3 \times 3}$
from \Cref{eq:hDipole,eq:JNfDef,eq:JFfDef,eq:bNfDef,eq:bFfDef}.
We find that $h\Coax$ is an eigenvalue by verifying ${\bf A}\DoD = h\Coax \DoD$. 
Furthermore, $h\Copl$ is a double eigenvalue because ${\bf A}\DoD_\perp = h\Copl \DoD_\perp$ for any vector $\DoD_\perp$ that is orthogonal to $\DoD$.
Therefrom, the statement \Cref{eq:hOpt} follows from basic linear algebra.
The threshold $kr_\tn{th}$ in \Cref{eq:krThresh} is found by solving the equation $|h\Coax|^2 = |h\Copl|^2$ for $kr$, using the definitions \Cref{eq:hCoax,eq:hCopl}.}

\section{Channel Statistics}\label{sec:stats}
Our starting point is the assumption that the TX and RX antenna orientations (unit vectors) are random and statistically independent, with uniform distributions
\begin{align}
\o\Tx, \o\Rx \iid \, \UD(\mathcal{S})
\label{eq:UniformOrient}
\end{align}
on the unit sphere $\mathcal{S} \subset \mathbb{R}^3$. The quantities $\coeffH,k,r,\DoD$ are considered non-random throughout. Hence, the statistics of $h$ in  \Cref{eq:hDipole} are determined by the joint statistics of $J\nf$, $J\ff$.

\subsection{In the Near-Field Region or Far-Field Region}

First, we address the important marginal distributions of $J\nf$ and $J\ff$ which describe the statistics of $h$ in the near-field region ($kr \ll kr_\tn{th}$) and the far-field region ($kr \gg kr_\tn{th}$), respectively.

\begin{prop}\label{prop:JDistr}
Assume \Cref{eq:UniformOrient}.
Then the near-field alignment factor $J\nf$ has the marginal probability density function (PDF)
\begin{align}
f_{J\nf}(J\nf) = \f{1}{2\bEquiv} \cdot \left\{\!\begin{array}{lrl}
1                                    & &|J\nf| \leq \f{1}{2} \\
1 - \f{\arcosh\left(2|J\nf|\right)}{\arcosh(2)} \!\! & \f{1}{2} < \!\!\!\! &|J\nf| <  1        \\
0                                    & 1 \leq \!\!\!\! &|J\nf|
\end{array}\right.
\label{eq:PdfJNf}
\end{align}
with $\bEquiv = \f{\sqrt{3}}{2\,\mathrm{arcosh}(2)}$.
The far-field alignment factor exhibits
\begin{align}
f_{J\ff}(J\ff) &= 
\frac{1}{2} \left( \frac{\pi}{2} - \arcsin|J\ff| \right) \cdot \IndicatorFunc_{[-1,1]}(J\ff) \, .
\label{eq:PdfJFf}
\end{align}
\end{prop}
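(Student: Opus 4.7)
The plan is to exploit a structural feature common to both $J\nf$ and $J\ff$. Both quantities have the form $J = \o\Rx\Tr\b$, where $\b$ depends only on $\o\Tx$ and the fixed direction $\DoD$. First I would condition on $\o\Tx$: then $\b$ is deterministic with magnitude $\beta := \|\b\|$, so $J = \beta \cdot (\hat{\b}\Tr\o\Rx)$ with $\hat{\b} = \b/\beta$ a fixed unit vector. Because $\o\Rx \sim \UD(\mathcal{S})$ is rotationally invariant, Archimedes' hat-box theorem gives $\hat{\b}\Tr\o\Rx \sim \UD(-1,1)$, so $J \mid \o\Tx$ is uniform on $[-\beta,\beta]$. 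The same theorem shows that $\alpha := \DoD\Tr\o\Tx$ is itself uniform on $[-1,1]$, and the magnitudes in \Cref{eq:bNfMagn,eq:bFfMagn} depend on $\o\Tx$ only through $\alpha$. Combining these yields the master formula
\begin{align}
f_J(y) \;=\; \frac{1}{2}\int_{-1}^{1} \frac{1}{2\beta(\alpha)}\,\IndicatorFunc_{\{\beta(\alpha)\ge|y|\}}\, d\alpha,
\end{align}
which I would then specialize to $\beta = \beta\nf$ and $\beta = \beta\ff$.

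For the far field I would substitute $\beta\ff(\alpha) = \sqrt{1-\alpha^2}$ and, using symmetry in $\alpha$, change variables from $\alpha$ to $\beta$; the Jacobian factor $\beta/\sqrt{1-\beta^2}$ cancels the $1/\beta$ in the integrand, leaving $\tfrac{1}{2}\int_{|y|}^{1} d\beta/\sqrt{1-\beta^2} = \tfrac{1}{2}(\pi/2 - \arcsin|y|)$, which is exactly \Cref{eq:PdfJFf}. For the near field I would do the analogous computation with $\beta\nf(\alpha) = \tfrac{1}{2}\sqrt{1+3\alpha^2}\in[\tfrac{1}{2},1]$; the required antiderivative is $\int d\beta/\sqrt{4\beta^2-1} = \tfrac{1}{2}\arcosh(2\beta)$, and the overall constants fall into place via the identity $1/(2\bEquiv) = \arcosh(2)/\sqrt{3}$.

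The hard part will be the bookkeeping in the near-field case. Because $\beta\nf$ has a \emph{strictly positive} lower bound $\tfrac{1}{2}$, the indicator $\IndicatorFunc_{\{\beta(\alpha)\ge|y|\}}$ behaves differently for $|y|\le\tfrac{1}{2}$ (every $\alpha$ contributes, so the lower integration limit is $\beta = \tfrac{1}{2}$) than for $\tfrac{1}{2}<|y|<1$ (only $\alpha$ with $\beta\nf(\alpha)\ge|y|$ contribute, so the lower limit becomes $\beta = |y|$). Resolving this case split is what produces the two nonzero pieces of \Cref{eq:PdfJNf}: a flat plateau of height $\arcosh(2)/\sqrt{3}$ on $|y|\le\tfrac{1}{2}$ and a monotonically decaying tail on $\tfrac{1}{2}<|y|<1$. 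I would finish by verifying that the resulting piecewise density integrates to $1$ as a sanity check; beyond this bookkeeping, the argument is textbook integration.
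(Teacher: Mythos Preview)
Your proposal is correct and follows essentially the same route as the paper: both invoke Archimedes' hat-box theorem to obtain $\DoD\Tr\o\Tx \sim \UD(-1,1)$ and the conditional laws $J_* \mid \beta_* \sim \UD(-\beta_*,\beta_*)$, then marginalize over the magnitude. The paper explicitly records the intermediate magnitude densities $f_{\beta\nf}$ and $f_{\beta\ff}$ but omits the final marginalization integrals, whereas you carry those integrals out in detail (including the case split at $|J\nf|=\tfrac{1}{2}$); the two arguments are otherwise identical.
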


{\noindent}Thereby, $\IndicatorFunc_{[-1,1]}$ is the indicator function for this interval. The PDFs are shown in \Cref{fig:jnfpdf,fig:jffpdf}.

\begin{figure}[!ht]
\vspace{-1mm}
\centering
\subfloat[PDF of alignment factor $J\nf$]{\label{fig:jnfpdf}
\includegraphics[width=.45\columnwidth]{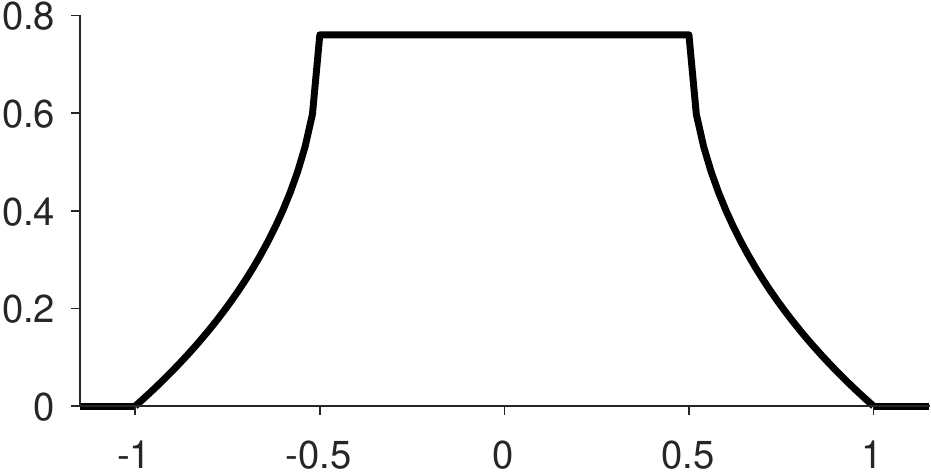}
\put(-99,47){$f_{J\nf}$}}\
\subfloat[PDF of alignment factor $J\ff$]{\label{fig:jffpdf}
\includegraphics[width=.45\columnwidth]{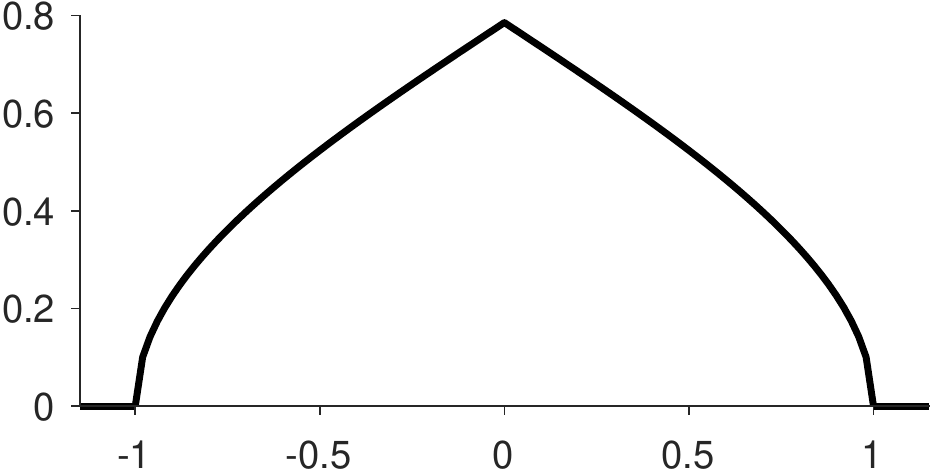}
\put(-93,47){$f_{J\ff}$}} \\
\subfloat[PDF of magnitude $\beta\nf$]{\label{fig:betanfpdf}
\includegraphics[width=.44\columnwidth]{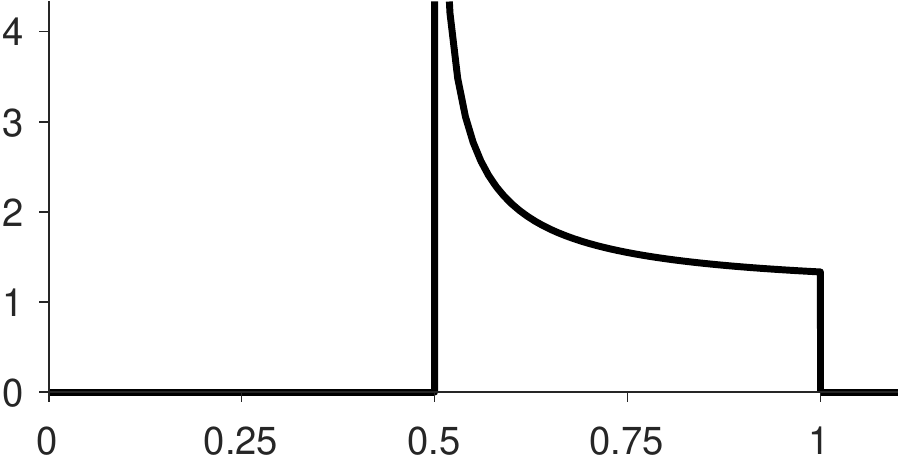}
\put(-43,40){$f_{\beta\nf}$}}\ \
\subfloat[PDF of magnitude $\beta\ff$]{\label{fig:betaffpdf}
\includegraphics[width=.44\columnwidth]{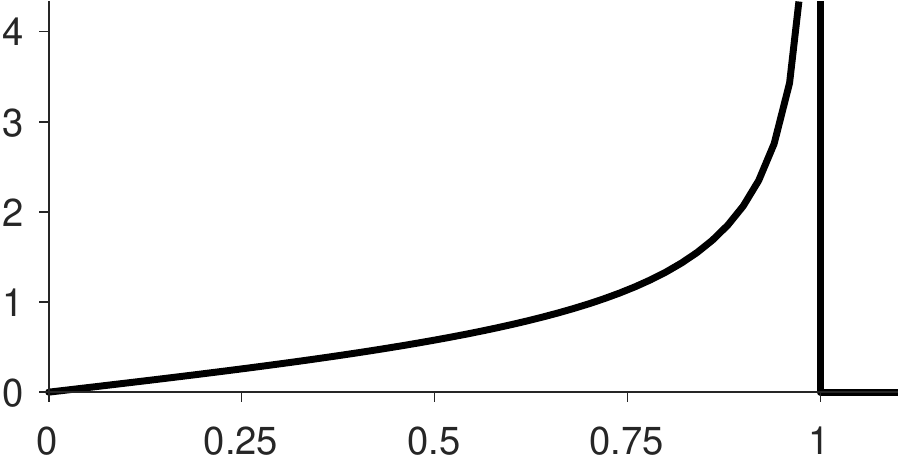}
\put(-41,40){$f_{\beta\ff}$}} 
\caption{Marginal PDFs arising from random antenna orientations on both ends with uniform distributions in 3D.}
\label{fig:jpdfMajor}
\end{figure}

\begin{proof}
We will heavily use the fact that, for a random constant-length vector in $\mathbb{R}^3$ with uniform distribution on a sphere, any projection has uniform distribution. This fact is a corollary of Archimedes' hat-box theorem or of the fact that the lateral surface area of a sphere cap is linear in its height (which implies a linear CDF for a projection, cf. \cite{DumphartPIMRC2016Short}).
A first implication to our formalism is that the TX-side projection $\DoD\Tr \o\Tx$, which determines the magnitudes $\beta\nf$ and $\beta\ff$, has uniform distribution $\DoD\Tr \o\Tx \sim \UD(-1,1)$ due to $\o\Tx \sim \UD(\mathcal{S})$. Consequently, with a basic change-of-variables calculation we obtain from \Cref{eq:bNfMagn,eq:bFfMagn} the PDFs of the field magnitudes
\begin{align}
&f_{\beta\nf}(\beta\nf) 
= \f{4}{\sqrt{3}} \f{\beta\nf}{\sqrt{4\beta\nf^2 - 1}}
\cdot \IndicatorFunc_{[\f{1}{2},1]}(\beta\nf)
\, , \label{eq:magnitudeDistrNf} \\
&f_{\beta\ff}(\beta\ff) 
= \f{\beta\ff}{\sqrt{1 - \beta\ff^2}}
\cdot \IndicatorFunc_{[0,1]}(\beta\ff)
\label{eq:magnitudeDistrFf}
\end{align}
which are shown in \Cref{fig:betanfpdf,fig:betaffpdf}. The random RX orientation $\o\Rx \sim \UD(\mathcal{S})$ in \Cref{eq:JNfDef,eq:JFfDef} results in conditional distributions 
$J\nf \, | \, \beta\nf \sim \UD(-\beta\nf,\beta\nf)$
and
$J\ff \, | \, \beta\ff \sim \UD(-\beta\ff,\beta\ff)$. The joint PDFs
$f_{J\nf | \beta\nf} \cdot f_{\beta\nf}$ and $f_{J\ff | \beta\ff} \cdot f_{\beta\ff}$
yield $f_{J\nf}$ and $f_{J\ff}$ via marginalization integrals (the steps are omitted).
\end{proof}

\begin{prop}
\label{prop:MarginalRegionCDF}
Consider the cumulative distribution function (CDF) $F_{|h|^2}(s) = \Prob{|h|^2 \leq s}$. In the near-field region $kr \ll kr_\tn{th}$ (described by $J_* = J\nf$) or the far-field region $kr \gg kr_\tn{th}$ (described by $J_* = J\ff$), the approximation
\begin{align}
F_{|h|^2}(s)
\approx
\f{2 \cdot f_{J_*}\!(0)}{|h\Opt|}\, \sqrt{s}
\label{eq:MarginalRegionCDF}
\end{align}
applies under assumption \Cref{eq:UniformOrient}. It is accurate for $s \ll |h\Opt|^2$.
\end{prop}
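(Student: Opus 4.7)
The plan is to isolate the dominant term in \Cref{eq:hDipole} within each regime and reduce the problem to the one-dimensional distribution of $J_*$. In the near-field region $kr \ll kr_\tn{th}$, the prefactors $(kr)^{-3}$ and $(kr)^{-2}$ dwarf $(2kr)^{-1}$, so the $J\ff$-contribution is negligible and
\begin{align*}
h \approx \coeffH \left(\f{1}{(kr)^3} + \f{j}{(kr)^2}\right) J\nf = h\Coax J\nf \, ,
\end{align*}
whose prefactor magnitude equals $|h\Opt|$ for $kr \leq kr_\tn{th}$ by \Cref{eq:hOpt}. Hence $|h|^2 \approx |h\Opt|^2 J\nf^2$. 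A symmetric argument in the far-field region $kr \gg kr_\tn{th}$ gives $h \approx \f{\coeffH}{2kr} J\ff$, whose prefactor magnitude tends to $|h\Copl| = |h\Opt|$ as $kr \to \infty$, so that $|h|^2 \approx |h\Opt|^2 J\ff^2$. Both regimes thus collapse to $|h|^2 \approx |h\Opt|^2 J_*^2$.

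Next, a one-line change of variables reduces the CDF to an integral of a known marginal density,
\begin{align*}
F_{|h|^2}(s) \approx \Prob{|J_*| \leq \tfrac{\sqrt{s}}{|h\Opt|}}
= \int_{-\sqrt{s}/|h\Opt|}^{\sqrt{s}/|h\Opt|} f_{J_*}(J)\, dJ \, .
\end{align*}
For $s \ll |h\Opt|^2$ the integration window shrinks to a tiny neighborhood of the origin. In the near-field case the window lies entirely within the flat plateau of \Cref{eq:PdfJNf}, where $f_{J\nf}$ is \emph{exactly} the constant $\f{1}{2\bEquiv}$; in the far-field case, \Cref{eq:PdfJFf} is continuous at the origin with $f_{J\ff}(0) = \f{\pi}{4}$. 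Replacing the integrand by its value at $J = 0$ therefore yields the leading-order expression $2 f_{J_*}(0)\sqrt{s}/|h\Opt|$, matching the claim.

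The main obstacle is verifying that the ``dominant-term'' approximation remains valid \emph{precisely} in the fade regime where $J_*$ itself is small and the subdominant term could in principle take over. In the near-field, for instance, the ratio of the retained to the discarded contribution is $\sim 2|J\nf|/(kr)^2$, so the approximation requires $|J\nf| \gg (kr)^2/2$; via $|J\nf| \sim \sqrt{s}/|h\Opt|$ and $|h\Opt| \sim |\coeffH|/(kr)^3$ this translates to a lower bound on $s$ that sits comfortably below $|h\Opt|^2$ whenever $kr$ is sufficiently small. An analogous consistency check applies in the far field. These compatibility conditions are exactly what the two $\ll$-assumptions in the statement encode, after which the remaining steps are routine.
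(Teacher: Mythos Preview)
Your proof is correct and follows essentially the same route as the paper: reduce to $|h|^2 \approx |h\Opt|^2 J_*^2$ in each regime, write the CDF as a short integral of $f_{J_*}$ around the origin, and replace the integrand by $f_{J_*}(0)$. The paper's version is terser (it simply asserts the reduction and bounds the integral by $f_{J_*}(0)$ times the interval length), whereas your added consistency check on the dominant-term approximation when $J_*$ is small is a worthwhile refinement that the paper omits.
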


\begin{proof}
Either case fulfills $|h|^2 \approx |h\Opt|^2 J_*^2$. We calculate
\begin{align}
F_{|h|^2}(s) &\approx F_{J_*^2}\Big( \f{s}{|h\Opt|^2} \Big)
= \ProbLR{ |J_*| \leq\! \f{\sqrt{s}}{|h\Opt|} }
\nonumber \\
&= 2 \int_0^{\f{\sqrt{s}}{|h\Opt|}} f_{J_*}\!(x) \,dx
\leq 2\cdot f_{J_*}\!(0) \int_0^{\f{\sqrt{s}}{|h\Opt|}} dx \, .
\end{align}
This bound is tight for small integration intervals because the integrand is continuous, as seen in \Cref{fig:jnfpdf,fig:jffpdf}.
\end{proof}

The CDF behavior $F_{|h|^2}(s) \propto \sqrt{s}$ for small $s$ hints that fading events $|h|^2 \approx 0$ occur with significant probability. This is due to the probability densities $f_{J\nf}(0) > 0$, $f_{J\ff}(0) > 0$. 


\subsection{Near-Far Transition with Random Receiver Orientation}

We consider the statistics of $h \in \mathbb{C}$ when both near- and far-field propagation make significant contributions. First, we consider the case of a random RX orientation $\o\Rx \sim \UD(\mathcal{S})$ while $\o\Tx$ and $\DoD$ are fixed. This interesting case will serve as preparation for the fully random case.

We start our mathematical approach by observing from \Cref{eq:hDipole,eq:JNfDef,eq:JFfDef} that the channel coefficient is an inner product 
\begin{align}
& h       = \o\Rx\Tr \v \, , &&
\o\Rx \in \mathbb{R}^3 , \
\v    \in \mathbb{C}^3
\label{eq:ChanCoeffProj}
\end{align}
of the random $\o\Rx$ and a unitless, complex-valued field vector
\begin{align}
\v = \coeffH \left(\!\left(\f{1}{(kr)^3} \!+\! \f{j}{(kr)^2} \!\right) \!\b\nf  \!+\! \f{1}{2kr} \b\ff \!\right) .
\label{eq:v}
\end{align}
This field vector is non-random in this context because it is determined by the non-random $\coeffH, kr, \DoD, \o\Tx$. We consider $\v\subRe = \Re(\v)$ and $\v\subIm = \Im(\v)$ and note that these two vectors are linearly independent unless \mbox{$\DoD\Tr \o\Tx = 0$} or $\DoD\Tr \o\Tx = \pm 1$; the simple proof thereof is omitted. 

The random channel coefficient is expressed as
\begin{align}
h =  \Re(h) + j \cdot \Im(h) = \o\Rx\Tr \v\subRe + j \cdot \o\Rx\Tr \v\subIm \, ,
\label{eq:ChanCoeffProjDetail}
\end{align}
which exhibits a statistical dependence between the real and imaginary part because the random $\o\Rx$ affects both. In the following, we specify the statistics of $h$ in terms of the conditional PDF
$f(h \, | \, \o\Tx) = f(h \, | \, \v)$.

\begin{prop}\label{prop:PdfJoint}
Consider a random unit vector $\o\Rx \sim \UD(\mathcal{S})$, i.e. with uniform distribution on the unit sphere in $\mathbb{R}^3$, and a non-random vector $\v = \v\subRe + j \cdot \v\subIm \in \mathbb{C}^3$ with linearly independent $\v\subRe, \v\subIm \in \mathbb{R}^3$.
Let $v\subRe = \|\v\subRe\|$, $v\subIm = \|\v\subIm\|$, and
$\rho = \f{\v\subRe\Tr \v\subIm}{v\subRe v\subIm}$ (correlation coefficient).
Then the joint PDF of the projections
$\Re(h) = \o\Rx\Tr \v\subRe$
and
$\Im(h) = \o\Rx\Tr \v\subIm$
is
\begin{align}
& f(h \, | \, \v ) = f\big(\Re(h),\Im(h) \, | \, \v\subRe,\v\subIm \big) = 
\label{eq:PdfJoint} \\[2mm]
& \nonumber
\f{1}{v\subRe v\subIm \sqrt{1 \!-\! \rho^2}} \,
\CircFunc\Bigg(\, \bigg\| \!
\mtx{cc}{\! 1 \! & \,\,\,\, 0 \\ \! \rho \! & \!\! \sqrt{1 \!-\! \rho^2} \!\!}^{-1}\! 
\bigg[ \begin{array}{l} \! \Re(h) / v\subRe \!\! \\ \! \Im(h) / v\subIm \!\! \end{array} \bigg]
\bigg\|^2 \,\Bigg)
\end{align}
with $\CircFunc(x) = \f{1}{2\pi\sqrt{1 - x}} \IndicatorFunc_{[0,1]}(x)$. The uniform marginal distributions
$h\subRe \sim \UD(-v\subRe,v\subRe)$ and $h\subIm \sim \UD(-v\subIm,v\subIm)$ apply.
\end{prop}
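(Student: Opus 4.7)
My plan is to reduce the problem to the two-dimensional projection of the uniform sphere measure onto the plane $P = \mathrm{span}(\v\subRe, \v\subIm)$, since $h = \o\Rx\Tr\v\subRe + j \cdot \o\Rx\Tr\v\subIm$ depends on $\o\Rx$ only through its orthogonal projection onto $P$. This plane is well-defined precisely because of the linear-independence hypothesis, which also forces $|\rho| < 1$, so that $\sqrt{1-\rho^2}$ is real and strictly positive.

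First, I would construct an orthonormal basis $\{\mathbf{e}_1, \mathbf{e}_2\}$ of $P$ via Gram--Schmidt such that $\v\subRe = v\subRe\, \mathbf{e}_1$ and $\v\subIm = v\subIm\bigl(\rho\, \mathbf{e}_1 + \sqrt{1-\rho^2}\, \mathbf{e}_2\bigr)$. Setting $\mathbf{p} = (p_1, p_2) = (\mathbf{e}_1\Tr\o\Rx,\, \mathbf{e}_2\Tr\o\Rx)$, I then need the joint PDF of $\mathbf{p}$, i.e., the pushforward of $\UD(\mathcal{S})$ under orthogonal projection to $P$. A direct sphere-to-disk Jacobian calculation (in the same spirit as the hat-box argument used in the proof of \Cref{prop:JDistr}) yields $f_{\mathbf{p}}(p_1, p_2) = \CircFunc(\|\mathbf{p}\|^2)$ on the open unit disk: each $\mathbf{p}$ has two preimages on $\mathcal{S}$ with third coordinate $\pm\sqrt{1 - \|\mathbf{p}\|^2}$, each contributing a surface Jacobian factor $1/\sqrt{1 - \|\mathbf{p}\|^2}$, and normalization by the sphere area $4\pi$ reproduces $\f{1}{2\pi\sqrt{1 - \|\mathbf{p}\|^2}}$ exactly.

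Next, the basis construction immediately gives the linear relation $[\Re(h), \Im(h)]\Tr = M\mathbf{p}$, where $M = \diag(v\subRe, v\subIm)\cdot L$ and $L$ is the lower-triangular matrix featured in \Cref{eq:PdfJoint}. Since $|\det M| = v\subRe v\subIm\sqrt{1 - \rho^2}$ and $\|\mathbf{p}\|^2 = \|M^{-1}[\Re(h), \Im(h)]\Tr\|^2$, applying the standard change-of-variables formula to $f_{\mathbf{p}}$ reproduces the claimed expression verbatim. The uniform marginals then follow by applying the one-dimensional hat-box identity directly to the unit vectors $\v\subRe/v\subRe$ and $\v\subIm/v\subIm$, each projection being $\UD(-1,1)$ and subsequently scaled.

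The only genuinely delicate step is the 2-D sphere-to-disk projection density; once that density is in hand, the remainder is routine linear algebra and change of variables, so I do not expect any serious obstacle beyond keeping the Gram--Schmidt bookkeeping tidy.
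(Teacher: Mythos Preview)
Your proposal is correct and follows essentially the same route as the paper: Gram--Schmidt on $\v\subRe,\v\subIm$ to obtain an orthonormal pair (the paper's $\mathbf{m},\mathbf{n}$, your $\mathbf{e}_1,\mathbf{e}_2$), the sphere-to-disk projection density $\CircFunc(\|\cdot\|^2)$ for the orthonormal coordinates, and a linear change of variables via the matrix $M=\diag(v\subRe,v\subIm)L$ (the paper's $\mathbf{E}$). The only cosmetic difference is that the paper isolates the disk density as a separate lemma and, in its extended version, proves it via polar coordinates and the marginal of the third component, whereas you obtain it directly by counting the two preimages and applying the surface Jacobian; both arguments are standard and equivalent.
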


\renewcommand\myCustomVec{\Big[ \begin{array}{l}a_o \\[-.8mm] b_o \end{array} \Big]}
\begin{proof}
We apply the Gram-Schmidt process to $\v\subRe, \v\subIm$ to obtain orthonormal vectors
\mbox{${\bf m} = \f{\v\subRe}{v\subRe}$},
${\bf n} = \f{(\eye_3 - {\bf mm}\Tr)\v\subIm}{\|(\eye_3 - {\bf mm}\Tr)\v\subIm\|}$.
They fulfill
$\v\subRe = v\subRe{\bf m}$
and
$\v\subIm = v\subIm \rho\,{\bf m} + v\subIm \sqrt{1-\rho^2}\,{\bf n}$.
Written as linear map,
$[\v\subRe\ \v\subIm] = [{\bf m\ n}] {\bf E}\Tr$.
The projections of $\o\Rx$ thus fulfill $[\Re(h)\ \Im(h)] = \o\Rx\Tr [{\bf m\ n}] {\bf E}\Tr = [m_o\ n_o] {\bf E}\Tr$.
The joint PDF $f_{m_o,n_o}$ is given by \Cref{prop:PdfJointOrth} below.
We subsequently obtain
the PDF of $\Re(h), \Im(h)$ with a change-of-variables argument:
for random $m_o, n_o$ with PDF $f_{m_o,n_o}$ and an invertible linear map ${\bf E}$, the PDF
$f(\Re(h), \Im(h) \, | \, \v)
= \f{1}{\det({\bf E})} f_{m_o,n_o}(\,{\bf E}^{-1} [\Re(h)\ \Im(h)]\Tr)$ applies.
\end{proof}

\begin{lemma}\label{prop:PdfJointOrth}
Consider orthonormal vectors ${\bf m}, {\bf n} \in \mathbb{R}^3$ and a random unit vector ${\bf o} \sim \UD(\mathcal{S})$. The joint PDF of $m_o = \o\Tr {\bf m}$, $n_o = \o\Tr {\bf n}$ is then given by
$f_{m_o,m_o}(m_o, n_o)
= \CircFunc( m_o^2 + n_o^2 )$.
\end{lemma}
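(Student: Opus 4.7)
The plan is to reduce the statement to a routine sphere-projection computation by rotating coordinates, and then read off the density by summing the two hemispherical contributions.

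First, I would extend $\{{\bf m},{\bf n}\}$ to an orthonormal basis $\{{\bf m},{\bf n},{\bf p}\}$ of $\mathbb{R}^3$ and collect these vectors in the orthogonal matrix $\bQ = [{\bf m}\ {\bf n}\ {\bf p}]$. Because $\UD(\mathcal{S})$ is invariant under orthogonal transformations, $\bQ\Tr {\bf o}$ is again uniform on $\mathcal{S}$. Its three coordinates are precisely $m_o = {\bf o}\Tr{\bf m}$, $n_o = {\bf o}\Tr{\bf n}$, and an auxiliary $p_o = {\bf o}\Tr{\bf p}$, subject to $m_o^2 + n_o^2 + p_o^2 = 1$. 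Thus the claim reduces to computing the joint PDF of the first two coordinates of a uniformly distributed point on the unit sphere.

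Second, I would compute that marginal by the standard ``shadow'' argument. For any $(m_o,n_o)$ in the open unit disk there are two preimages on $\mathcal{S}$, one in each hemisphere, with $|p_o| = \sqrt{1-m_o^2-n_o^2}$. Writing the hemispheres as graphs $p_o = \pm\sqrt{1-m_o^2-n_o^2}$ over the disk, the sphere's surface element becomes $dS = \f{dm_o\,dn_o}{|p_o|}$ (via $dS = \sqrt{1 + (\partial p_o/\partial m_o)^2 + (\partial p_o/\partial n_o)^2}\,dm_o\,dn_o$). Multiplying the uniform surface density $\f{1}{4\pi}$ by this Jacobian and adding the two hemispheres yields
\begin{align*}
f_{m_o,n_o}(m_o,n_o)
= \f{2}{4\pi}\cdot\f{1}{\sqrt{1-m_o^2-n_o^2}}\,\IndicatorFunc_{[0,1]}(m_o^2+n_o^2) ,
\end{align*}
which is exactly $\CircFunc(m_o^2+n_o^2)$.

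The only step needing any care is the surface-element calculation and the factor $2$ from summing the two hemispheres; the rest is bookkeeping. As a cross-check, integrating out $n_o$ over $|n_o| \leq \sqrt{1-m_o^2}$ should recover $f_{m_o}(m_o) = \tfrac{1}{2}\IndicatorFunc_{[-1,1]}(m_o)$, consistent with the uniform projection fact already used in the proof of \Cref{prop:JDistr}.
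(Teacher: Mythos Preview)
Your proof is correct. Both you and the paper begin identically by invoking rotational invariance to reduce to the first two Cartesian coordinates of a uniform point on $\mathcal{S}$, but you then diverge in how you extract the density. The paper passes to polar coordinates $(R,\phi)$ in the $(o_1,o_2)$-plane, uses the known uniform marginal $f(o_3)=\tfrac{1}{2}$ to get $f_R(R)=R/\sqrt{1-R^2}$, pairs it with the uniform angle $f_\phi=\tfrac{1}{2\pi}$, and then changes variables back to Cartesian via the Jacobian $1/R$. You instead compute the pushforward in one shot: parametrize each hemisphere as a graph over the disk, read off the surface element $dS=dm_o\,dn_o/|p_o|$, multiply by $1/(4\pi)$, and double. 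Your route is slightly more direct (one change of variables rather than two) and does not invoke the Archimedes projection fact as an intermediate step; the paper's route has the minor advantage of reusing a result already established in the proof of \Cref{prop:JDistr}. Both are standard and equally rigorous, and your marginal cross-check against $f_{m_o}=\tfrac{1}{2}$ is a nice sanity test.
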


\ifdefined\DoSixPageSubmission
For the proof of \Cref{prop:PdfJointOrth} we refer to \cite[Lemma~4.5]{Dumphart2020}.
\else
\begin{proof}
Due to the symmetry of $\mathcal{S}$, we can prove the statement by deriving the joint PDF of $o_1 = [1\,0\,0] \o$ and $o_2 = [0\,1\,0] \o$. We use polar coordinates $(o_1,o_2) = (R\cos\phi,R\sin\phi)$ and note that $R = \sqrt{o_1^2 + o_2^2} = \sqrt{1 - o_3^2}$. The marginal PDF $f(o_3) = \f{1}{2}$ leads to $f_R(R) = \f{R}{\sqrt{1-R^2}}$ for the radius $R \in [0,1]$. On the other hand, conditioned on $o_3$, the pair $(o_1,o_2)$ has uniform distribution on a circle of radius $R$; hence $f_{\phi|R} = f_{\phi} = \f{1}{2\pi}$, yielding the joint PDF $f_{R,\phi} = f_R \cdot f_{\phi|R} = \f{R}{2\pi\sqrt{1-R^2}}$.
Now $f_{o_1,o_2}$ follows from a change of variables from $r,\phi$ to $o_1,o_2$, i.e. a bijective map $[0,1] \times [0,2\pi] \rightarrow [-1,1]^2$. We obtain $f_{o_1,o_2} = \f{1}{R} f_{R,\phi} = \f{1}{2\pi\sqrt{1-R^2}}$ with $R^2 = o_1^2 + o_2^2$ via the Jacobian determinant.
%
\end{proof}
\fi
We note that the distribution $h \, | \, \o\Tx$ is equivalent to $h \, | \, \DoD\Tr \o\Tx$ because $\o\Rx \sim \UD(\mathcal{S})$ has rotational invariance. Thus,
\begin{align}
f(h \, | \, \o\Tx)
= f(h \, | \, \v)
= f(h \, | \, \DoD\Tr \o\Tx)
\, .
\label{eq:PdfEquivDetail}
\end{align}
An evaluation of this conditional PDF is shown in \Cref{fig:hTransConditional} for the exemplary value $\DoD\Tr \o\Tx = 0.3$.


\begin{prop}\label{prop:TransitionRegionCDF}
Let $h$ be distributed according to \Cref{prop:PdfJoint}. Then the CDF $F_{|h|^2}$ is within the bounds 
\begin{align}
\f{s}{2b} \,\leq\, 
F_{|h|^2|\v}(s | \v) \,\leq\,
\f{s}{2b} \, \left(1 - \f{s}{s_0}\right)^{-1/2}
\label{eq:hFadeJoint}
\end{align}
if $s < s_0$, whereby $s_0 = a -  \sqrt{a^2 - b^2}$ with
$a = \f{1}{2}(v\subRe^2 + v\subIm^2)$ and
$b = v\subRe v\subIm \sqrt{1-\rho^2}$.
\end{prop}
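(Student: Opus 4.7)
The plan is to express $|h|^2$ as a quadratic form in the pair $(m_o,n_o)$ introduced in the proof of \Cref{prop:PdfJoint} and then integrate Lemma~1's density over a sublevel set that happens to be an ellipse.

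First, using the Gram--Schmidt decomposition from the proof of \Cref{prop:PdfJoint}, we have $\Re(h)=v\subRe\, m_o$ and $\Im(h)= v\subIm\rho\, m_o + v\subIm\sqrt{1-\rho^2}\, n_o$. Squaring and adding yields a quadratic form $|h|^2 = Q(m_o,n_o)$ whose matrix $\bQ$ has trace $v\subRe^2+v\subIm^2 = 2a$ and determinant $v\subRe^2 v\subIm^2(1-\rho^2)=b^2$. Hence $\bQ$ has eigenvalues $\lambda_{1,2} = a\pm\sqrt{a^2-b^2}$ with $\lambda_2 = a-\sqrt{a^2-b^2}= s_0$. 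Crucially, by \Cref{prop:PdfJointOrth} the joint density of $(m_o,n_o)$ is $\CircFunc(m_o^2+n_o^2)$, which depends only on $m_o^2+n_o^2$ and is therefore invariant under a rotation of the $(m_o,n_o)$ plane. So I can diagonalize $\bQ$ by such a rotation without changing the density.

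In the new coordinates $(\tilde m,\tilde n)$ the CDF reads
\begin{align*}
F_{|h|^2|\v}(s|\v) = \iint_{\lambda_1 \tilde m^2+\lambda_2 \tilde n^2\le s}\!\!\CircFunc(\tilde m^2+\tilde n^2)\,d\tilde m\,d\tilde n.
\end{align*}
The domain of integration is an ellipse with semi-axes $\sqrt{s/\lambda_1}$ and $\sqrt{s/\lambda_2}$, hence of area $\pi s/\sqrt{\lambda_1\lambda_2}=\pi s/b$. For $s<s_0=\lambda_2$ this ellipse is strictly contained in the disk of radius $\sqrt{s/s_0}<1$, so everywhere on the integration region we have
\begin{align*}
\frac{1}{2\pi}\;\le\;\CircFunc(\tilde m^2+\tilde n^2)\;\le\;\frac{1}{2\pi\sqrt{1-s/s_0}}.
\end{align*}
Multiplying these uniform bounds by the area $\pi s/b$ immediately yields the claim \Cref{eq:hFadeJoint}.

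The main thing to get right is the identification $s_0=\lambda_2$ (the smaller eigenvalue of $\bQ$), because this is exactly what ensures that the integration ellipse stays inside the unit disk on which $\CircFunc$ is supported and keeps $\CircFunc$ finite; apart from that, the argument is a two-line sandwiching of a constant weight by its extrema over the integration domain.
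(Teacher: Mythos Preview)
Your proof is correct and follows essentially the same approach as the paper: bound the density uniformly over the integration region and multiply by the region's area, with $s_0$ identified as the smaller eigenvalue of $\mathbf{E}\Tr\mathbf{E}$ (your $\bQ$). The only cosmetic difference is that you work in the $(m_o,n_o)$-plane, where the density is radial and the sublevel set $\{|h|^2\le s\}$ is an ellipse of area $\pi s/b$, whereas the paper works in the $(\Re h,\Im h)$-plane, where the region is a disk of area $\pi s$ and the density carries the factor $1/b$; the two pictures are related by the linear map $\mathbf{E}$ and yield identical bounds.
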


\begin{proof}
The joint PDF of $[\Re(h)\ \Im(h)] = \o\Rx\Tr [\v\subRe\ \v\subIm]$ is given by \Cref{prop:PdfJoint}. There, a linear map ${\bf E} \in \mathbb{R}^{2 \times 2}$ maps from the closed unit disk to the ellipse that is the support of $f(h | \v)$.
Let $s_0$ be the smaller eigenvalue of ${\bf E}\Tr {\bf E}$; the stated formula is obtained from the characteristic polynomial. Now $s < s_0$ guarantees that $f(h | \v) < \infty$ because then $h$ is in the interior of $\mathrm{supp}\,f(h | \v)$.
In particular,
$f(h | \v) = \f{\CircFunc(\|{\bf E}^{-1} [\Re(h)\ \Im(h)]\Tr\|^2)}{b} \leq \f{\CircFunc(s/s_0)}{b}$.
We find the upper bound via
$\Prob{|h|^2 \leq s | \v}
= \int_{|h|^2 \leq s} f(h|\v) dh
\leq \f{\CircFunc(s/s_0)}{b} \int_{|h|^2 \leq s} dh
= \f{\CircFunc(s/s_0) \cdot \pi s}{b}
= \f{s}{2b} /  \sqrt{1 - s/s_0}$.
Analogously, the lower bound is due to $f(h|\v) \geq \f{\psi(0)}{b} = \f{1}{2\pi b}$ for $s < s_0$.
\end{proof}

In essence, \Cref{prop:TransitionRegionCDF} states that $F_{|h|^2}(s) \propto s$ for small $s$ in the transition region. In contrary, the near- and far-field region behavior $F_{|h|^2}(s) \propto \sqrt{s}$ from \Cref{prop:MarginalRegionCDF} exhibits a larger concentration of probability mass near $|h|^2 = 0$.
The advantage of the transition region is caused by the sum of phase-shifted field vectors in \Cref{eq:v} providing polarization diversity: a deep fade $h=0$ can only occur if $\o\Rx$ is orthogonal to both $\v\subRe$ and $\v\subIm$. In other words, the field vector $\Re(\v e^{j2\pi ft})$ now oscillates on an ellipse, not on a line \cite[Sec.~2.5]{Orfanidis2002}.

\begin{figure}[!ht]\centering
\vspace{-2.5mm}
\includegraphics[height=32mm]{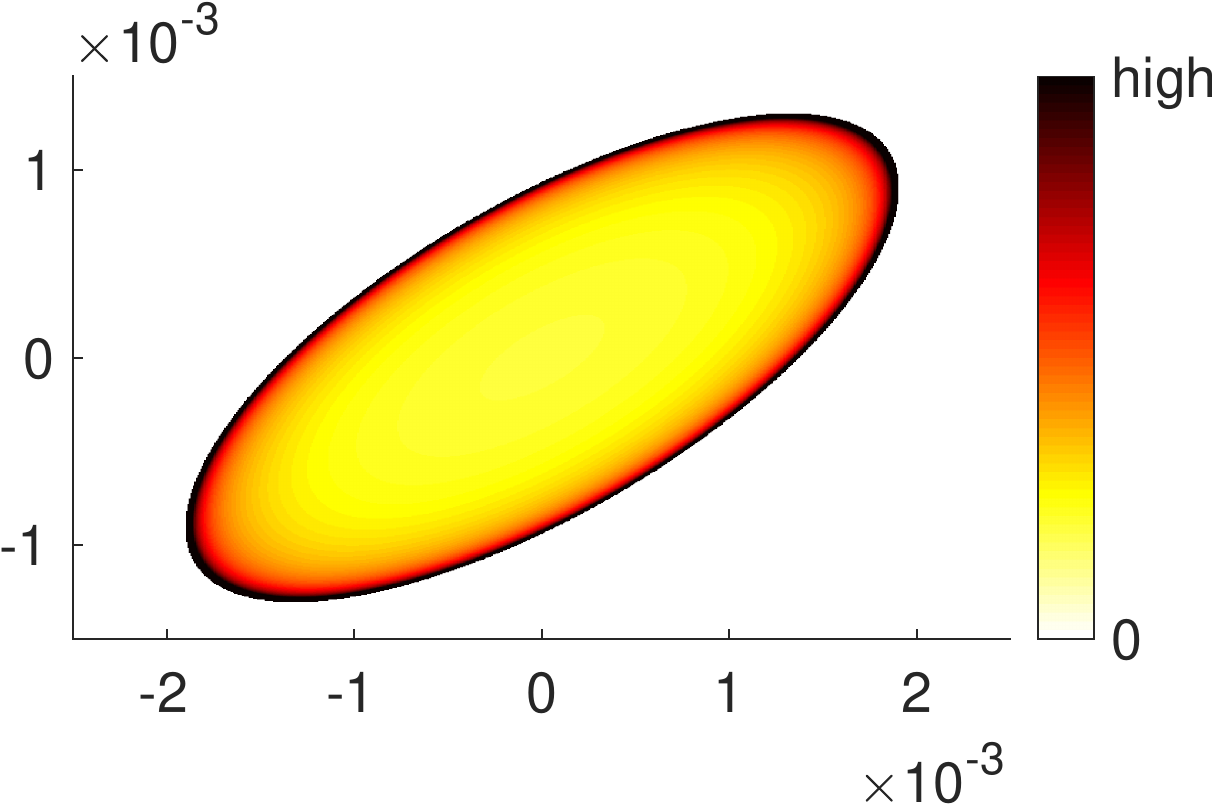}
\put(6,22){\rotatebox{90}{\small{conditional PDF}}}
\put(18,22){\rotatebox{90}{\scriptsize{$f(h\,|\,\DoD\Tr\o\Tx = 0.3)$}}}
\put(-151,44){\rotatebox{90}{\footnotesize{$\Im(h)$}}}
\put(-86,0){\footnotesize{$\Re(h)$}}
\vspace{-1.5mm}
\caption{Conditional PDF of the channel coefficient $h \in \mathbb{C}$ for random RX orientation $\o\Rx$, described by \Cref{prop:PdfJoint} in closed form. This evaluation assumes the values $\DoD\Tr\o\Tx = 0.3$, $kr = 2$, $\bar\alpha = 10^{-2}$.}
\label{fig:hTransConditional}
\end{figure}


\subsection{Near-Far Transition, Random Orientations at Both Ends}

Finally, we address the distribution of $h$ in the fully-random case $\o\Tx, \o\Rx \iid \UD(\mathcal{S})$, again under consideration of all terms in \eqref{eq:hDipole}.
\Cref{fig:hFullyRandOrScatter} shows scatter plots of $h$ for various $kr$ values.

\begin{figure}[!h]\centering
\subfloat[$kr = 0.5$]{\label{fig:hFullyRandOrScatter_a}
\includegraphics[height=12.4mm]{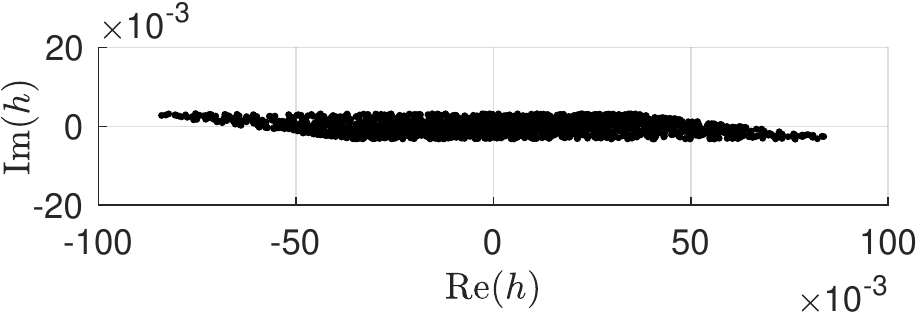}}\,\,\,\,\,\,%
\subfloat[$kr = 1$]{\label{fig:hFullyRandOrScatter_b}
\includegraphics[height=12.4mm]{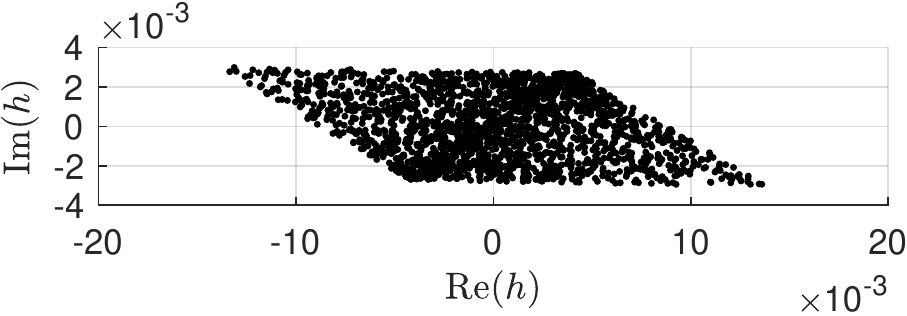}}
\\
\subfloat[$kr = 2$]{\label{fig:hFullyRandOrScatter_c}
\includegraphics[height=24.8mm]{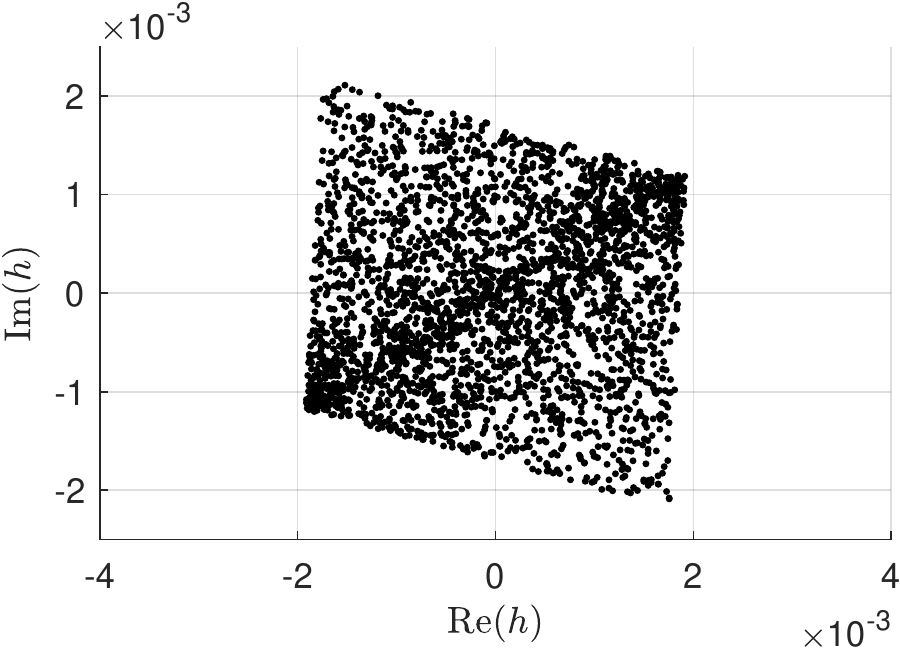}}\,\%
\subfloat[$kr = 5\pi$]{\label{fig:hFullyRandOrScatter_d}
\includegraphics[height=24.8mm]{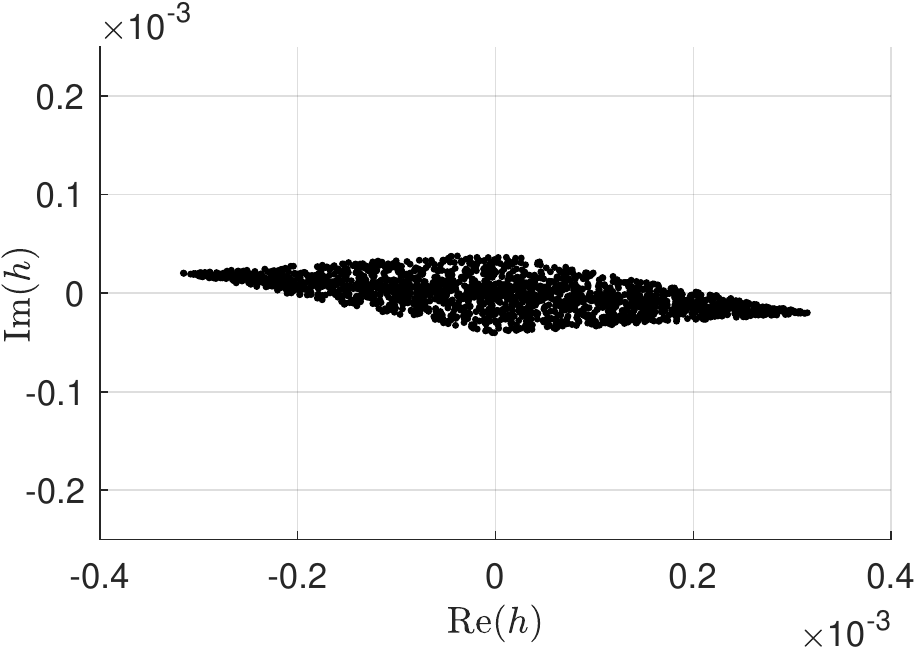}}
\caption{Scatter plots of the random channel coefficient $h \in \mathbb{C}$ between two dipoles with random orientations $\o\Tx, \o\Rx \sim \UD(\mathcal{S})$ for different regions. For $kr \ll kr_\tn{th}$ or $kr \gg kr_\tn{th}$, all samples lie on a line. The plots were obtained with random sampling and $\bar\alpha = 10^{-2}$ was assumed.}
\label{fig:hFullyRandOrScatter}
\end{figure}

\begin{prop}
Under \Cref{eq:UniformOrient} and based on the conditional PDF $f(h | \DoD\Tr \o\Tx) = f(h | \v)$ from \Cref{prop:PdfJoint}, the PDF of $h$ is given by
\begin{align}
f(h) = \f{1}{2} \cdot \int_{-1}^{+1} f(h \,|\, \DoD\Tr \o\Tx = x) \, dx \, .
\label{eq:NumIntegral}
\end{align}
\end{prop}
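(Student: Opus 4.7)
The statement is essentially a marginalization identity, so my plan is to unpack the law of total probability and invoke two facts already established earlier in the excerpt: the rotational-symmetry reduction $f(h \mid \o\Tx) = f(h \mid \DoD\Tr \o\Tx)$ from \Cref{eq:PdfEquivDetail}, and the uniform distribution of the TX-side projection implied by Archimedes' hat-box theorem.

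First I would write the target density as a marginalization of the joint law of $(h,\o\Tx)$ over $\o\Tx$. Using that $\v$ (and hence the conditional law of $h$) depends on $\o\Tx$ only through the scalar $x = \DoD\Tr \o\Tx$, as derived in \Cref{eq:PdfEquivDetail}, I can collapse the integral over the sphere to a one-dimensional integral over $x$:
\begin{align}
f(h) = \int_{\mathcal{S}} f(h \mid \o\Tx)\, f_{\o\Tx}(\o\Tx)\, d\o\Tx
     = \int_{-1}^{+1} f(h \mid \DoD\Tr \o\Tx = x)\, f_{\DoD\Tr \o\Tx}(x)\, dx .
\end{align}

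Next I would invoke the fact, already used in the proof of \Cref{prop:JDistr}, that $\o\Tx \sim \UD(\mathcal{S})$ together with the unit vector $\DoD$ yields $\DoD\Tr \o\Tx \sim \UD(-1,1)$, i.e. $f_{\DoD\Tr \o\Tx}(x) = \tfrac{1}{2}$ for $x \in [-1,+1]$. Substituting this uniform density into the previous display immediately gives \Cref{eq:NumIntegral}, and independence of $\o\Rx$ from $\o\Tx$ (needed so that averaging over $\o\Tx$ preserves the conditional law of $h$ given $\o\Tx$) is exactly the i.i.d. hypothesis \Cref{eq:UniformOrient}.

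There is no real obstacle here; the only subtlety worth a sentence is to justify that conditioning on $\o\Tx$ and then on the scalar $x = \DoD\Tr \o\Tx$ give the same conditional density for $h$. This follows because, after $\o\Tx$ is fixed, $h = \o\Rx\Tr \v$ with $\v$ determined by $\o\Tx$ as in \Cref{eq:v}, and the random $\o\Rx \sim \UD(\mathcal{S})$ is rotationally invariant; hence the conditional law of $h$ depends on $\o\Tx$ only through the invariants of $\v$, which in turn depend on $\o\Tx$ only through $|\DoD\Tr \o\Tx|$ (and trivially on the fixed $\DoD$). This reduction is precisely what is recorded in \Cref{eq:PdfEquivDetail}, so the substitution is legitimate and the proof is complete.
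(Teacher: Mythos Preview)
Your proposal is correct and follows essentially the same route as the paper: marginalize over the TX orientation, invoke the reduction \Cref{eq:PdfEquivDetail} so that only the scalar $\DoD\Tr\o\Tx$ matters, and then use $\DoD\Tr\o\Tx \sim \UD(-1,1)$ (from the proof of \Cref{prop:JDistr}) to replace $f_{\DoD\Tr\o\Tx}(x)$ by $\tfrac{1}{2}$. Your additional remarks on rotational invariance and the role of independence are a welcome elaboration but do not change the argument.
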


\begin{proof}
We find
$f(h) = \int_{-1}^{+1} f(h | \DoD\Tr \o\Tx = x) f(\DoD\Tr \o\Tx = x) dx$
from \Cref{eq:PdfEquivDetail} and marginalization.
With $\o\Tx \sim \UD(\mathcal{S})$ while $\DoD$ is non-random, the uniform distribution $\DoD\Tr \o\Tx \sim \UD(-1,1)$ applies (see the proof of \Cref{prop:JDistr}). Thus $f(\DoD\Tr \o\Tx) = \f{1}{2}$.
\end{proof}


A closed-form solution of the integral \Cref{eq:NumIntegral} is unavailable, but an evaluation obtained with numerical integration is shown in \Cref{fig:FinalPdfNumInt}. The results are supplemented by the geometric explanation of the rhombus-shaped support of the distribution in \Cref{fig:FinalPdfSupport}.

\begin{figure}[!ht]\centering\subfloat[fully-random-case PDF $f(h)$ for\\{\color{white}......}$kr=2$ (near-far-field transition)]{
\label{fig:FinalPdfNumInt}
\includegraphics[height=37mm]{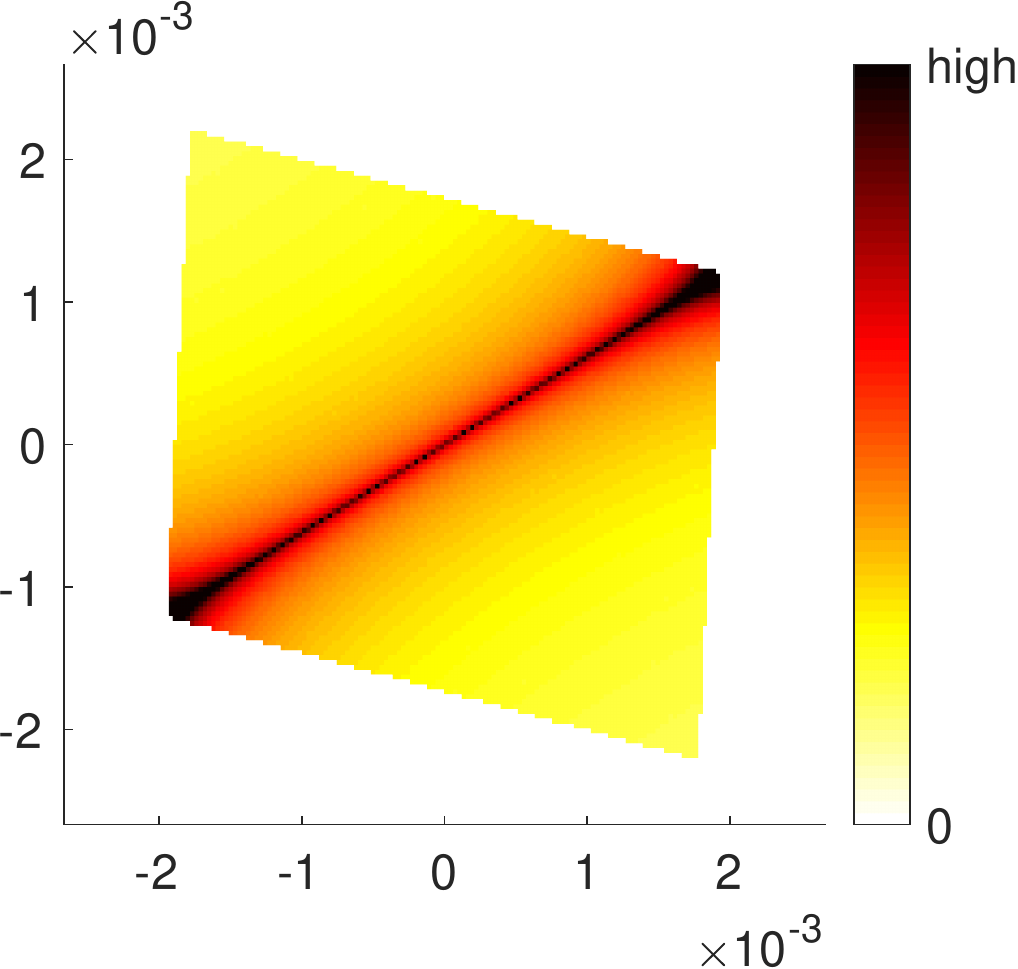}
\put(-3,18){\rotatebox{90}{\scriptsize{channel coeff. PDF $f(h)$}}}
\put(-70,-1){\scriptsize{$\Re(h)$}}
\put(-119.5,49){\rotatebox{90}{\scriptsize{$\Im(h)$}}}
}
\ \ \ \ \
\subfloat[rhombus-shaped support of\\{\color{white}.....}the PDF $f(h)$ for $kr=2$]{
\label{fig:FinalPdfSupport}
\includegraphics[height=37mm]{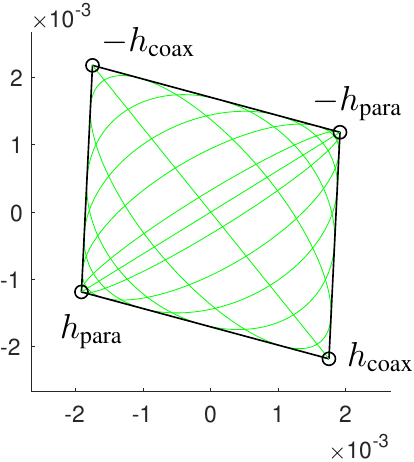}
\put(-53,-1){\scriptsize{$\Re(h)$}}
\put(-102.5,49){\rotatebox{90}{\scriptsize{$\Im(h)$}}}
}
\caption{PDF of the random channel coefficient $h \in \mathbb{C}$ for random dipole orientations $\o\Tx, \o\Rx \sim \UD(\mathcal{S})$, evaluated here for $kr = 2$ and $\bar\alpha = 10^{-2}$. The PDF was computed by solving \Cref{eq:NumIntegral} numerically.
\Cref{fig:FinalPdfSupport} shows how the rhombus-shaped support $\mathrm{supp}\,f(h)$ arises from a union of ellipses $\mathrm{supp}\,f(h \, | \, \DoD\Tr \o\Tx)$, shown for $\DoD\Tr \o\Tx \in \{0,\, .1,\, .3,\, .5,\, .7,\, .9,\, 1\}$. For $\DoD\Tr \o\Tx \in \{0,\, 1\}$ the ellipse becomes a line. The ellipses for $\DoD\Tr \o\Tx \approx 0$ cause a concentration of probability mass between $\pm h\Copl$.}
\label{fig:hSomeSupportIllu}
\end{figure}

\begin{figure}[!ht]\centering
\includegraphics[width=.96\columnwidth,trim=0 5 0 0]{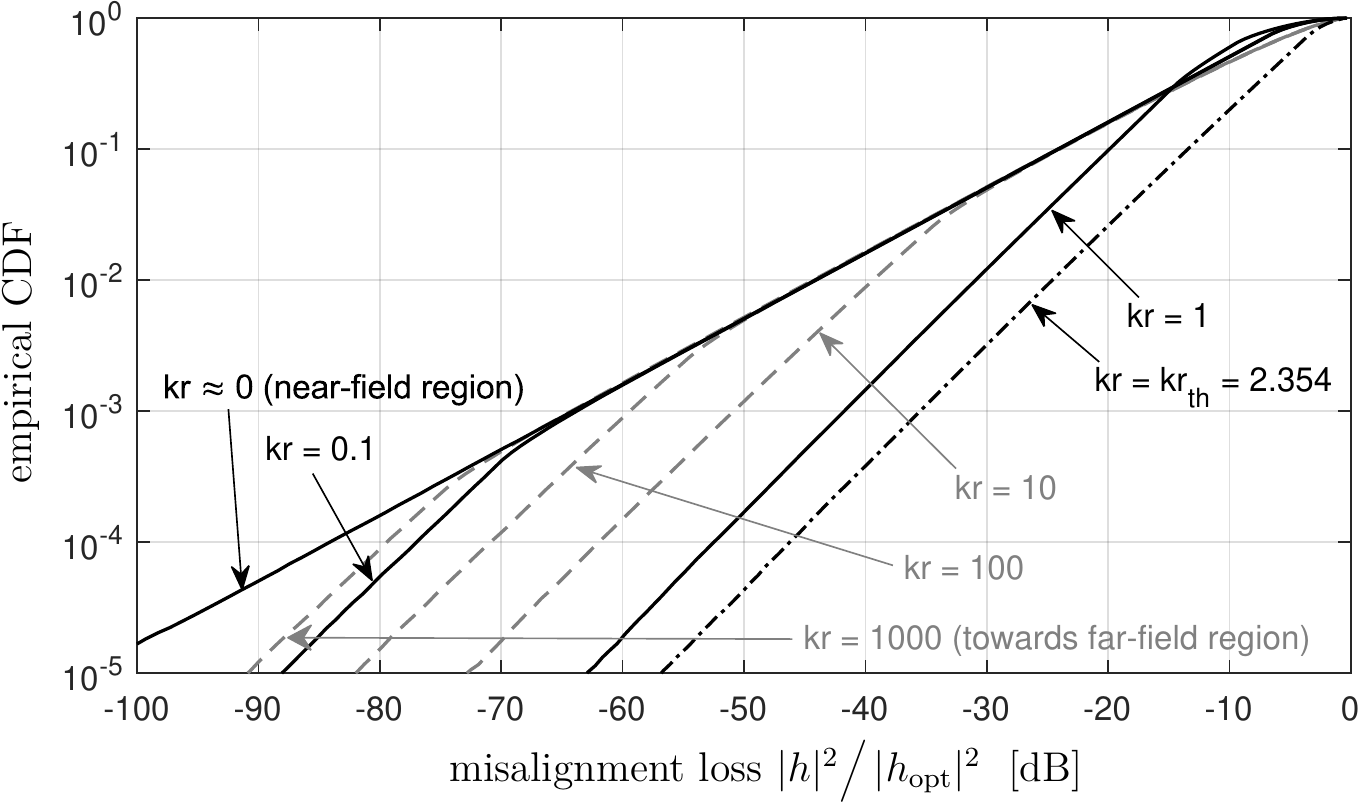}
\caption{Statistics of the channel attenuation due to random TX and RX orientations, computed via Monte Carlo simulation for i.i.d. uniform distributions in 3D. We observe that severe misalignment loss occurs with significant probability, especially in the near-field region and the far-field region. 
The near-far-field transition features a beneficial polarization diversity effect.}
\label{fig:StatMotivation}
\end{figure}

We argue that, via \Cref{eq:NumIntegral}, the beneficial property
$F_{|h|^2}(s) \propto s$
for small $s$ carries over from
$F_{|h|^2 | \v}(s\,|\,\v) \propto s$
in \Cref{prop:TransitionRegionCDF}. This is supported by the Monte-Carlo simulation in \Cref{fig:StatMotivation}, but a rigorous argument is unavailable.
A non-rigorous argument is that polarization diversity (i.e. linearly independent $\v\subRe, \v\subIm$) occurs with probability $1$. Put differently, a problematic case $\DoD\Tr \o\Tx = 0$ or $\DoD\Tr \o\Tx = \pm 1$, where the boundary ellipse of $\mathrm{supp}\, f_{h|\DoD\Tr \o\Tx}$ degenerates to a line, occurs with probability $0$.

\section{Outage Analysis}\label{sec:outage}
\subsection{Outage Power Transfer Efficiency $\eta_\OutProb$}

The power transfer efficiency (PTE) $\eta = |h|^2$ is a random variable in the context of this paper. Analogous to the concept of outage capacity \cite{Tse2005}, we consider the outage PTE, defined as the PTE value $\eta_\OutProb$ for which an outage event $|h|^2 < \eta_\OutProb$ occurs with a certain probability $\OutProb$. The CDF of $|h|^2$ describes this very dependence: $\OutProb = F_{|h|^2}(\eta_\OutProb)$.

\begin{prop}
\label{prop:OutagePTEStatements}
Assume \Cref{eq:UniformOrient} and that the RX is in the near-field region ($J_* = J\nf$) or the far-field region ($J_* = J\ff$). Then, a target PTE $\eta_\OutProb$ results in an outage probability
\begin{align}
\OutProb = F_{|h|^2}(\eta_\OutProb)
\approx 
2\cdot f_{J_*}\!(0) \sqrt{\f{\eta_\OutProb}{\eta\Opt}} \, .
\label{eq:OutProbPTE}
\end{align}
Vice versa, a target outage probability $\OutProb$ yields the outage PTE
\begin{align}
\eta_\OutProb = F_{|h|^2}^{-1}( \OutProb )
\approx
\f{\OutProb^2 \, \eta\Opt}{(2\cdot f_{J_*}\!(0))^2} \, .
\label{eq:OutagePTESpecific}
\end{align}
The approximations are accurate for $\eta_\OutProb \ll \eta\Opt$.
\end{prop}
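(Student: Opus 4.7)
The plan is to derive this result as a direct corollary of \Cref{prop:MarginalRegionCDF}, with essentially no new technical content beyond an algebraic rearrangement and inversion. The key observation is that $\eta = |h|^2$ and $\eta\Opt = |h\Opt|^2$, so \Cref{prop:MarginalRegionCDF} can be rewritten as
\begin{align*}
F_{|h|^2}(s) \;\approx\; \f{2 f_{J_*}\!(0)}{\sqrt{\eta\Opt}}\sqrt{s} \;=\; 2 f_{J_*}\!(0) \sqrt{\f{s}{\eta\Opt}} \, ,
\end{align*}
valid for $s \ll \eta\Opt$.

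First I would substitute $s = \eta_\OutProb$ into the above CDF expression and invoke the definition $\OutProb = F_{|h|^2}(\eta_\OutProb)$ of the outage PTE, which immediately yields \Cref{eq:OutProbPTE}. The validity of the approximation under the stated regime $\eta_\OutProb \ll \eta\Opt$ is inherited directly from \Cref{prop:MarginalRegionCDF}.

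Second, I would invert this relation to establish \Cref{eq:OutagePTESpecific}. Because the right-hand side of \Cref{eq:OutProbPTE} is monotonically increasing in $\eta_\OutProb$ (since $f_{J_*}\!(0) > 0$ in both the near-field and far-field cases, as visible in \Cref{fig:jnfpdf,fig:jffpdf}), we may solve for $\eta_\OutProb$: squaring both sides and isolating gives $\eta_\OutProb \approx \OutProb^2 \eta\Opt / (2 f_{J_*}\!(0))^2$, which is the claim. The condition $\eta_\OutProb \ll \eta\Opt$ corresponds to the small-$\OutProb$ regime in which the underlying CDF approximation is tight.

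There is no real obstacle; the proof is a one-line application of \Cref{prop:MarginalRegionCDF} followed by an algebraic inversion. The only thing that might warrant explicit mention is the positivity and continuity of $f_{J_*}$ at the origin, which guarantees both monotonic invertibility of the approximate CDF near $s = 0$ and the tightness of the linear bound used in \Cref{prop:MarginalRegionCDF}.
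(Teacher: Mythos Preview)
Your proposal is correct and matches the paper's approach exactly: the paper's proof of this proposition is the single sentence ``The statements follow directly from \Cref{prop:MarginalRegionCDF}.'' Your write-up simply makes explicit the substitution $s = \eta_\OutProb$ and the algebraic inversion that the paper leaves implicit.
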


\begin{proof}
The statements follow directly from \Cref{prop:MarginalRegionCDF}.
\end{proof}

From \Cref{eq:OutProbPTE,eq:OutagePTESpecific} we observe the proportionality
$\OutProb \propto \eta\Opt^{-\f{1}{2}}$
as well as
$\eta_\OutProb \propto \OutProb^2$.
This scaling behavior demonstrates the drastic fading effect due to random antenna orientations. 
On the one hand, increasing $\eta\Opt$ (e.g., by improving technical parameters or reducing the distance) is not an efficient means for reducing $\OutProb$.
On the other hand, requiring some degree of reliability (i.e. a small $\OutProb$) is associated with an extremely small PTE $\eta_\OutProb$. For example, aiming for a factor-$10$ improvement of $\OutProb$ demands a $20\dB$ loss for $\eta_\OutProb$.


The situation improves in the near-far-field transition:
there, $F_{|h|^2}(s) \propto s$ holds for small $s$ (cf. \Cref{prop:TransitionRegionCDF}), which results in the more beneficial proportionalities
$\OutProb \propto \eta\Opt^{-1}$
and
$\eta_\OutProb \propto \OutProb$. The improvement stems from polarization diversity.

\subsection{Outage Capacity $C_\OutProb$}

We shift our focus to narrowband data communication over this fading channel, with transmit power $P\Tx$ and reception in additive white Gaussian noise (AWGN) of power $P\N$. The signal-to-noise ratio $\SNR = |h|^2 P\Tx / P\N$ is random and the instantaneous channel capacity
$C = \log_2(1 + \SNR)$,
measured in $\mathrm{bit/s/Hz}$,
is thus also random and can fade to zero.

A well-established measure for the communication performance of a fading channel is the outage capacity \cite[Eq.~5.57]{Tse2005}
\begin{align}
C_\OutProb = \log_2\!\bigg( 1 + \f{F_{|h|^2}^{-1}( \OutProb ) \cdot P\Tx}{P\N} \bigg)
\label{eq:OutageCapacity}
\end{align}
for which, by definition, the event $\log_2(1+\SNR) < C_\OutProb$ occurs with probability $\OutProb$. We argue that $C_\OutProb \propto F_{|h|^2}^{-1}( \OutProb )$ for small $\OutProb$ because the bound
$C_\OutProb \leq \log_2(e) \cdot F_{|h|^2}^{-1}( \OutProb ) \cdot P\Tx/P\N$,
which is obtained through log-linearization,
is tight for low SNR or for a small target $\OutProb$. Hence, by \Cref{eq:OutagePTESpecific}, the near- and the far-field regions exhibit the scaling behavior
\begin{align}
C_\OutProb \propto \OutProb^2 .
\end{align}
This means that a target outage probability
$\OutProb \ll 1$
can only be achieved with an extremely small data rate.
In contrary, the near-far-field transition exhibits the more beneficial scaling behavior $C_\OutProb \propto \OutProb$ due to polarization diversity (by \Cref{prop:TransitionRegionCDF}).

\subsection{Bit Error Rate $\BER$}

Another popular measure of the communication performance over a fading channel is the bit error rate $\BER$. For antipodal modulation (BPSK) and reception in AWGN, its value given $h$ is $Q(\sqrt{2\,\SNR})$ whereby $\SNR = |h|^2 P\Tx / P\N$ is random and subject to fading. \cite[Eq.~3.13]{Tse2005}

\begin{prop}\label{prop:BERUpperBound}
Assume \Cref{eq:UniformOrient}, AWGN, and either the near-field region ($J_* = J\nf$) or the far-field region ($J_* = J\ff$). Then, the bit error rate of BPSK modulation has the upper bound
\begin{align}
\BER < \f{f_{J_*}\!(0)}{\sqrt{\pi\cdot\SNR\Opt}}
\label{eq:BERUpperBound}
\end{align}
which becomes tight for large $\SNR\Opt = |h\Opt|^2 P\Tx / P\N$.
\end{prop}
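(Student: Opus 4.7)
The plan is to evaluate $\BER = \EV{Q(\sqrt{2\,\SNR})}$ as an expectation over $J_*$ and then bound the integrand in two successive steps.

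First, I would invoke the same near-/far-field approximation $|h|^2 \approx |h\Opt|^2 J_*^2$ used in the proof of \Cref{prop:MarginalRegionCDF}, which gives $\SNR \approx \SNR\Opt \cdot J_*^2$. Taking the expectation with the marginal PDF from \Cref{prop:JDistr},
\[
\BER \approx \int_{-1}^{1} f_{J_*}\!(J)\, Q\!\left(\sqrt{2\,\SNR\Opt}\,|J|\right) dJ.
\]

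Second, I would exploit the fact (evident from the explicit densities in \Cref{eq:PdfJNf,eq:PdfJFf} and shown in \Cref{fig:jnfpdf,fig:jffpdf}) that both $f_{J\nf}$ and $f_{J\ff}$ attain their global maximum at $J=0$. Replacing $f_{J_*}\!(J)$ by $f_{J_*}\!(0)$ and extending the domain of integration to all of $\mathbb{R}$ yields
\[
\BER < 2\,f_{J_*}\!(0) \int_0^\infty Q\!\left(\sqrt{2\,\SNR\Opt}\, J\right) dJ.
\]

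Third, a change of variables $u = \sqrt{2\,\SNR\Opt}\, J$ reduces the integral to the standard value $\int_0^\infty Q(u)\,du = 1/\sqrt{2\pi}$, which one can verify by swapping the order of integration in the definition of $Q$. Collecting prefactors delivers exactly the bound \Cref{eq:BERUpperBound}.

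For the claimed tightness at large $\SNR\Opt$, I would observe that $Q(\sqrt{2\,\SNR\Opt}\,|J|)$ concentrates on a vanishingly small neighborhood of $J = 0$ as $\SNR\Opt \to \infty$. On that neighborhood $f_{J_*}\!(J) \to f_{J_*}\!(0)$ by continuity, making the second inequality asymptotically sharp, while the tails outside $[-1,1]$ contribute a super-polynomially small error. The main obstacle is making this limiting statement rigorous; a dominated-convergence argument applied to $\sqrt{\SNR\Opt}\cdot\BER$ after the substitution in the third step should suffice to show that both inequalities close in the high-SNR limit.
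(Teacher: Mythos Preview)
Your proof is correct and follows essentially the same strategy as the paper: write $\BER$ as an expectation over $J_*$, replace $f_{J_*}(J)$ by its maximum $f_{J_*}(0)$, and evaluate the resulting Gaussian-tail integral. The only difference is in how that integral is handled. You extend the integration range from $[0,1]$ to $[0,\infty)$ and invoke $\int_0^\infty Q(u)\,du = 1/\sqrt{2\pi}$ directly. The paper instead keeps the range $[0,1]$, evaluates $\int_0^1 Q(\sqrt{2\,\SNR\Opt}\,J)\,dJ$ in closed form as $\frac{1-e^{-\SNR\Opt}}{\sqrt{4\pi\,\SNR\Opt}} + Q(\sqrt{2\,\SNR\Opt})$, and then applies the Mills-ratio bound $Q(x) < \frac{1}{x\sqrt{2\pi}}e^{-x^2/2}$ to the second term; the two terms then combine to exactly $\frac{1}{\sqrt{4\pi\,\SNR\Opt}}$. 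Your route is shorter and avoids integration by parts and the Mills bound; the paper's route yields an intermediate expression that is itself a slightly sharper (finite-$\SNR\Opt$) bound before the final simplification. Your tightness argument via dominated convergence is also sound and matches the paper's informal claim.
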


\begin{proof}
$\SNR \approx J_*^2\,\SNR\Opt$
applies in the near- or far-field region. For the bit error rate we calculate $\BER = \EV{ Q\Big(\sqrt{2J_*^2 \,\SNR\Opt} \,\Big) }
= \int_{-1}^1 f_{J_*}\!(J_*) \, Q(\!\sqrt{2J_*^2 \SNR\Opt} ) dJ_*
\leq 2\cdot f_{J_*}\!(0) \int_0^1  \, Q(\!\sqrt{2J_*^2 \SNR\Opt} ) dJ_*
= 2\cdot f_{J_*}\!(0) \big( \f{1 - e^{-\SNR\Opt}}{\sqrt{4\pi\cdot\SNR\Opt}} + Q(\sqrt{2\SNR\Opt}\,) \big)$%
\ifdefined\DoSixPageSubmission
 and apply $Q(x) < \f{1}{x} \f{1}{\sqrt{2\pi}} e^{-x^2/2}$.
\else
. Applying the well-known $Q$-function bound $Q(x) < \f{1}{x} \f{1}{\sqrt{2\pi}} e^{-x^2/2}$ concludes the proof. 
\fi
%
%
\end{proof}

The large-SNR description in \Cref{prop:BERUpperBound} has the standard form $\BER \propto \SNR\Opt^{-L}$ from \cite[Eq.~3.158]{Tse2005}. We deduce that the diversity exponent is $L = \f{1}{2}$ for the near- and far-field regions, associated with catastrophic fading (worse than $L = 1$ of Rayleigh fading). In a similar fashion, it can be argued that $\BER \propto \SNR\Opt^{-1}$  in the near-far-field transition for large $\SNR\Opt$, i.e. $L = 1$ (like Rayleigh fading). This is a direct consequence of $f_{|h|}(x) \propto x$ for small $x$; the details are omitted.


%


\ifdefined\DoSixPageSubmission
\else

\section{Further Discussion}

\subsection{Comparison to Rayleigh Fading}
\label{sec:rayleigh}

Rayleigh fading $h \sim \mathcal{CN}(0,\sigma^2)$ is an established model for multipath radio channels with rich scattering \cite{Tse2005}.
It exhibits $F_{|h|^2}(s) = 1- e^{-s/\sigma^2}$ and thus $F_{|h|^2}(s) \approx \f{s}{\sigma^2}$ for small $s$.
The PDF support is the entire $\mathbb{C}$-plane. Likewise, the PDF supports observed in
\Cref{fig:hTransConditional,fig:hFullyRandOrScatter_b,fig:hFullyRandOrScatter_c,fig:FinalPdfNumInt} 
are two-dimensional manifolds, centered at $h=0$. The two dimensions are due to \textit{phase-shifted field components with linear independence} at the RX position, which \textit{yield polarization diversity}. For the links studied in this paper, these field components arise from the non-coherent superposition of near- and far-field propagation. For Rayleigh fading, they arise from the non-coherent superposition of different paths. Hence, it is intuitive that either case features a diversity exponent of $1$.

The near- and far-field regions do not bring polarization diversity: the PDF support degenerates to a one-dimensional manifold in the $\mathbb{C}$-plane and the diversity exponent drops to $1/2$ because realizations $h \approx 0$ become more likely. The asymptotic outage behavior is analogous to that of a channel coefficient with a real-valued Gaussian distribution, i.e. just the real part of a Rayleigh-distributed random variable.


\fi

\ifdefined\DoSixPageSubmission
\section{Implications for RFID and Backscatter}
\else
\subsection{Implications for RFID and Backscatter}
\fi

\label{sec:rfid}

So far, the results concerned links with an active TX equipped with a TX amplifier, where $\SNR \approx \SNR\Opt J_*^2$ with $J_* \in \{J\nf, J\ff\}$ in the near- or far-field region. For a passive RFID tag that uses load modulation or for backscatter communication, the fading channel applies twice and the relation changes to
$\SNR \approx \SNR\Opt J_*^4$, cf. \cite{DumphartPIMRC2016Short}, with the following severe consequences. The misalignment losses double in terms of $\mathrm{dB}$ value (e.g., the abscissa of \Cref{fig:StatMotivation}). Likewise, the bit error rate $\BER \propto \SNR\Opt^{-1/4}$ with a diversity exponent of only $1/4$.
\ifdefined\DoSixPageSubmission
\else
In the context of NFC payments, this can however be seen as a security feature: severe attenuate is likely for unintended links.
\fi

\ifdefined\DoSixPageSubmission
\else

\subsection{Improvements from Orthogonal Arrays}
\label{sec:arrays}

Consider a $3 \times 3$ MIMO link between two colocated arrays, each consisting of three orthogonal dipoles. In \cite{DumphartPIMRC2016Short,Dumphart2020} we showed that the performance of such a link is unaffected by the array orientations for the case that optimal beamforming (maximum-ratio combining) is used at both ends. The resultant $\SNR$ equals $\SNR\Opt$, like for an optimally arranged SISO link. In other words, fading is mitigated entirely.

Now consider a SIMO link with a colocated RX array of three orthogonal dipoles, using maximum-ratio combining. The system performs like a SISO link with $|h| = \|\v\|$ (see \cite{Dumphart2020}). When the field vector $\v$ from \Cref{eq:v} has an appreciable near-field component then fading is mitigated entirely, because of the lower-bounded magnitude $\f{1}{2} \leq \beta\nf$. However, fading does occur in the far-field region because $\|\v\| \propto \beta\ff$ does fade when $\o\Tx = \pm\DoD$. This fading channel exhibits a diversity exponent of $1$ because $f(\beta\ff) \approx \beta\ff$ for small $\beta\ff$, cf. \Cref{eq:magnitudeDistrFf}.

\fi

\section{Summary \& Conclusions}
\label{sec:summary}

We provided an analytical description of the statistics of the fading channel between two randomly oriented dipoles. Our outage analysis revealed that drastic signal losses are very likely to occur, especially in the near- and far-field regions. This emphasizes the importance of diversity concepts for this class of links, e.g., the use of a rotating source, appropriate antenna polarization, or antenna arrays and beamforming.
The results also suggest that, for special applications, it can be favorable to design a link such that the RX will typically be located in the transition region between near and far field.

\appendices

\section*{Appendix: Physical Conditions \& Details}
The channel coefficient $h$ in \Cref{eq:hDipole} relates the power wave emitted by a power-matched TX amplifier to the power wave into a power-matched RX amplifier or tank circuit. A necessary condition for \Cref{eq:hDipole} to apply is that $h$ (obtained with this formula) fulfills $|h|^2 \ll 1$,
\ifdefined\DoSixPageSubmission
i.e. the dipoles are weakly coupled.
\else
i.e. the dipoles are only weakly coupled and no simultaneous matching problem occurs \cite{RaholaTCS2008}.
We note that $|h|^2 \ll 1$ is fulfilled naturally if $h$ is in a deep fade, hence many aspects of the outage analysis in \Cref{sec:outage} also apply to short-distance links.
A detailed exposition is given in \cite{Dumphart2020}.
\fi

The prefactor $\coeffH = \bar\coeffH\,e^{-jkr}$ in \Cref{eq:hDipole} comprises technical link parameters in $\bar\coeffH$. For \textit{loop antennas} (i.e. coils), the value
$\bar\coeffH = \f{j \mu_0 A\Tx \NTurnTx A\Rx \NTurnRx f k^3}{\sqrt{4 R\Tx R\Rx}}$
applies if the coils are electrically small, the turn pitch angle is small, and the coil diameters are significantly smaller than $r$. We use the permeability $\mu_0$, TX- and RX-side number of turns $\NTurnTx$ and $\NTurnRx$, the coil areas $A\Tx$ and $A\Rx$, carrier frequency $f$, wavenumber $k$, and the TX- and RX-side antenna resistances $R\Tx$ and $R\Rx$. \cite[Sec.~5.2]{Balanis2005}

We note that
$\f{\mu_0}{2\pi} A\Tx \NTurnTx A\Rx \NTurnRx \, r^{-3} J\nf$ is the mutual inductance $M$. Furthermore, a small power-matched TX coil is described by a magnetic dipole moment phasor ${\bf m} = A\Tx \NTurnTx i\Tx \o\Tx$ with current phasor $i\Tx = \sqrt{P\Tx / R\Tx}$ and TX power $P\Tx$.

Between \textit{dipole antennas} without ohmic losses, $\bar\coeffH$ is given by the antenna directivity: $\bar\coeffH = 1.5$ for electrically small dipole antennas and $\bar\coeffH \approx 1.64$ for the $\lambda/2$-length case \cite[Cpt.~4]{Balanis2005}.
\ifdefined\DoSixPageSubmission
\else
The formalism of this paper applies to $\lambda/2$-dipole antennas because their precise behavior (see \cite{Balanis2005})
$\beta\ff = \cos(\f{\pi}{2} \cos(\theta\Tx)) / \sin(\theta\Tx)$
is approximated by \Cref{eq:bFfMagn} which can be written as $\beta\ff = \sin(\theta\Tx)$.
\fi


\section*{Acknowledgement}
We would like to thank Robin Kramer for valuable inputs and Bharat Bhatia for helping with the appendix.

\bibliographystyle{IEEEtran}
\bibliography{../../GD}

\end{document}